      \newtheorem{proposition}{Proposition}
      \newtheorem{lemma}{Lemma}
\def\s[#1\s]{\begin{align}\begin{split}#1\end{split}\end{align}}
\DeclareMathOperator{\e}{e}
\begin{document}

\begin{titlepage}
%\title{Bounded approximate tensor rank decomposition}
\title{
\hfill\parbox{4cm}{ \normalsize YITP-21-77}\\ 
\vspace{1cm} 
Counting tensor rank decompositions}
\author{Dennis Obster\footnote{dennis.obster@yukawa.kyoto-u.ac.jp} \ 
and Naoki Sasakura\footnote{sasakura@yukawa.kyoto-u.ac.jp} \\
{\small{\it Yukawa Institute for Theoretical Physics, Kyoto University,}}
\\ {\small{\it  Kitashirakawa, Sakyo-ku, Kyoto 606-8502, Japan}}
}

\date{February 2021}

\maketitle

\begin{abstract}
The tensor rank decomposition is a useful tool for the geometric interpretation of the tensors in the canonical tensor model (CTM) 
of quantum gravity. In order to understand the stability of this interpretation, 
it is important to be able to estimate how many tensor rank decompositions can approximate a given tensor.
More precisely, finding an approximate symmetric tensor rank decomposition of a symmetric tensor $Q$ with an error allowance $\Delta$ is to find vectors $\phi^i$ satisfying $\|Q-\sum_{i=1}^R \phi^i\otimes \phi^i\cdots \otimes \phi^i\|^2 \leq \Delta$. 
    The volume of all possible such $\phi^i$ is an interesting quantity which measures the amount of possible decompositions for a tensor $Q$ within an allowance. 
    While it would be difficult to evaluate this quantity for each $Q$, we find an explicit formula for a similar quantity by integrating over all $Q$ of unit norm.
    The expression as a function of $\Delta$ is given by the product of a hypergeometric function and a power function. 
    We also extend the formula to generic decompositions of non-symmetric tensors. 
    The derivation depends on the existence (convergence) of the partition function of a matrix model which appeared in the context of the CTM.
    
\end{abstract}

\end{titlepage}

\section{Introduction}
The canonical tensor model (CTM) is a tensor model for quantum gravity which is constructed in the canonical formalism in order to introduce time into a tensor model~\cite{Sasakura:2011sq}, with as its fundamental variables the canonically conjugate pair of real symmetric tensors of degree three, $Q_{abc}$ and $P_{abc}$. Interestingly, under certain algebraic assumptions this model has been found to be unique~\cite{Sasakura:2012fb}. Furthermore, several remarkable connections have been found between the CTM and general relativity~\cite{Sasakura:2014gia,Sasakura:2015pxa, Chen:2016ate}, which, combined with the fact that defining the quantised model is mathematically very simple and straightforward~\cite{Sasakura:2013wza}, makes this a very attractive model to study in the context of quantum gravity.

Recent developments in the study of the canonical tensor model sparked interest in the tensor rank decomposition from the perspective of quantum gravity. The tensor rank decomposition is a decomposition of tensors into a sum of rank-1 tensors~\cite{Hitchcock1927}, also called simple tensors, and it might be seen as a generalisation of the singular value decomposition of matrices to tensors.\footnote{For more information we would like to refer to Appendix~\ref{app:sec:TRD}} It is a tool frequently used in a broad range of sciences as it is often a very effective way to extract information from a tensor~\cite{koldaTRD}.

In~\cite{Kawano:2018pip} the tensor rank decomposition was used to extract topological and geometric information from tensors used in the CTM. Here, every term in the decomposition corresponds to a (fuzzy) point, collectively forming a space that models a universe. However, finding the exact tensor rank decomposition of a tensor is in general next to impossible~\cite{Hillar_NPhard}. This means that for a given tensor $Q_{abc}$, which is in the CTM the fundamental variable that is supposed to represent a spatial slice of spacetime, it may  potentially be approximated by several different decompositions, possibly corresponding to different universes. This leads to two questions related to the stability of this approach: 
\begin{itemize}
    \item How many tensor rank decompositions are close to a given tensor $Q_{abc}$? 
    \item Do different decompositions describe the same space (and if not; how much do they differ)?
\end{itemize}
In this work we focus on the former of these questions. To understand this question we introduce the configuration space of tensor rank decompositions for rank $R$, denoted by  $\mathcal{F}_R$, and introduce the quantity to describe the volume of the configuration space close to a tensor $Q$:\footnote{This is a formal definition which will be properly regulated later on.}
\begin{equation*}
    \mathcal{V}_R(Q, \Delta) = \int_{\mathcal{F}_R} {\rm d}\Phi \, \Theta(\Delta - \|Q-\Phi\|^2),
\end{equation*}
where $\Phi\in\mathcal{F}_R$ denotes a tensor rank decomposition in the space of tensor rank decompositions that is integrated over, $\Theta(x)\, (x\in \mathbb{R})$ is the Heaviside step function, and $\Delta$ is a parameter to define the maximum square distance between $Q$ and $\Phi$. Understanding this quantity better will lead to a better understanding of the tensor rank decomposition configuration space, and what to expect when aiming to approximate a tensor by a tensor rank decomposition.

Another motivation coming from the CTM to study the configuration space of tensor rank decompositions is coming from the quantum CTM. A noteworthy fact about the CTM is that it has several known exact solutions to the quantum constraint  equations~\cite{Narain:2014cya}. One of these has recently been extensively analysed due to the emergence of Lie-group symmetries in this wave function, which  potentially hints towards the emergence of macroscopic spacetimes~\cite{Obster:2017pdq, Obster:2017dhx, Lionni:2019rty, Sasakura:2019hql, Obster:2020vfo, Sasakura:2021lub}. This wave function, in the $Q$-representation, is closely related to a statistical model~\cite{Sasakura:2021lub} that is mathematically equivalent to
\begin{equation*}
    \Psi(Q) = \int_{\mathcal{F}_R} {\rm d}\Phi\, \mathcal{O}(\Phi) \, \e^{-\kappa (Q-\Phi)^2},
\end{equation*}
where $\mathcal{O}(\Phi)$ only depends on the weights of the components of the decomposition, which will be more precisely defined below. This shows that for a full understanding of this statistical model, understanding the underlying configuration space and the behaviour of volumes therein is important.

Besides research in the CTM, this work might be applicable more generally. Similar questions might arise in other areas of science, and mathematically there are a lot of open questions about the nature of the tensor rank decomposition. Understanding the configuration space constructed here might lead to significant insights elsewhere. For these reasons, the content of the paper is kept rather general. Our main research interests are real symmetric tensors of degree three, but we will consider both symmetric and generic (non-symmetric) tensors of general degree. 

This work is structured as follows. We define the configuration space of tensor rank decompositions in section~\ref{sec:configspace}. Here we also give a proper definition of $\mathcal{V}_R(Q,\Delta)$, and introduce the main quantity we will analyse, $\mathcal{Z}_R(\Delta)$, which is the average of $\mathcal{V}_R(Q,\Delta)$ over normalised tensors. Section~\ref{sec:derivation} contains the main result of our work. There we derive a closed formula for $\mathcal{Z}_R(\Delta)$, which is guaranteed to exist under the condition that a certain quantity $G_R$, which is independent of $\Delta$, exists and is finite. Another interesting connection to the CTM is found at this point, since this quantity $G_R$ is a generalisation of the partition function of the matrix model studied in~\cite{Lionni:2019rty, Sasakura:2019hql, Obster:2020vfo}. In section~\ref{sec:existence}, the existence of $G_R$ is proven for $R=1$, and numerical analysis is done for $R>1$ for a specific choice of volume form ${\rm d}\Phi$ to arrive at a conjecture for the maximal allowed value of $R$, called $R_c$. In section~\ref{sec:numerics} we present direct numerical computations of $\mathcal{Z}_R(\Delta)$ to further verify the analytical derivation, and conclude that the closed form indeed seems to be correct. Surprisingly, up to a divergent factor, the $\Delta$-behaviour still appears to hold for $R>R_c$. We finalise this work with some conclusions and discussions in section~\ref{sec:conclusion}.

\section{Volume in the space of tensor rank decompositions}\label{sec:configspace}
In this section we introduce the configuration space of tensor rank decompositions, and define the volume quantities we will analyse. We  consider two types of tensor spaces, namely the real symmetric tensors of degree $K$,  ${\rm Sym}^K(\mathbb{R}^N)$, and the space of generic (non-symmetric) real tensors, ${\mathbb{R}^N}^{\otimes K}$. This could be generalised even further in a relatively straightforward way, but for readability only these two cases will be discussed. First the symmetric case will be discussed, and afterwards the differences to the generic case will be pointed out. For more information about the tensor rank decomposition, see Appendix~\ref{app:sec:TRD} and references therein. 

Consider an arbitrary symmetric tensor of (symmetric) rank\footnote{Note that the usual definition of the rank of a tensor is the minimal value $R$ such that there is a solution to  equation~\eqref{eq:TRD_def}.} $R$ given by its tensor rank decomposition:
\begin{equation}
    \Phi_{a_1 \ldots a_K} = \sum_{i=1}^R \lambda_i \phi_{a_1}^i \ldots \phi_{a_K}^{i},\label{eq:TRD_def}
\end{equation}
where we choose $\phi_{a_k}^i$ to lie on the upper-hemisphere of the $N-1$-dimensional sphere, which we denote by $S^{N-1}_+$, and $\lambda_i \in \mathbb{R}$. This is mainly to remove redundancies, for later convenience and to make the generalisation easier.

The configuration space can now be defined as all of these possible configurations for a given rank $R$:
\begin{equation}\label{eq:def_config_space}
    \mathcal{F}_R := \mathbb{R}^R \times \underbrace{S^{N-1}_+ \times \ldots \times S^{N-1}_+}_{R\mathrm{\ times}} = \mathbb{R}^R \times{S^{N-1}_+}^{\times R}.
\end{equation}
Note that, while~\eqref{eq:TRD_def} links a given tensor rank decomposition in the space $\mathcal{F}_R$ to a tensor in the tensor space ${\rm Sym}^K(\mathbb{R}^N)$, our objects of interest are the tensor rank decompositions themselves.

We define an inner product on the tensor-space by, for $Q,P \in {\rm Sym}^K(\mathbb{R}^N)$,
\begin{equation}
    Q \cdot P = \sum_{a_1\ldots a_K = 1}^{N} Q_{a_1 \ldots a_K} P_{a_1 \ldots a_K},\label{eq:def_innerp}
\end{equation}
which induces a norm $\left\|Q\right\|^2 := Q \cdot Q = \sum_{a_1\ldots a_K=1}^N \left|Q_{a_1\ldots a_K}\right|^2$. 
We also use $Q^2\equiv\left\|Q\right\|^2$ for brevity. 
On the configuration space $\mathcal{F}_R$, we introduce a measure by the infinitesimal volume element
\begin{equation}
    {\rm d}\Phi_{w} = \prod_{i=1}^R |\lambda_i|^{w-1} {\rm d}\lambda_i\, {\rm d}\phi^i,\label{eq:def_volume_element}
\end{equation}
where ${\rm d}\lambda_i$ is the usual line-element of the real numbers, and ${\rm d}\phi^i$ is the usual volume element on the $N-1$-dimensional unit-sphere. $w$ (with $w\geq 1$) is introduced for generality. $w=1$ will turn out to be less singular, while $w=N$ corresponds to treating $(\lambda_i, \phi^i)$ as hyperspherical coordinates of $\mathbb{R}^N$.

In summary, for given rank $R$, we constructed a configuration space $\mathcal{F}_R$ in~\eqref{eq:def_config_space} with the infinitesimal volume element~\eqref{eq:def_volume_element}, taking inner product~\eqref{eq:def_innerp} on the tensor-space. If $R<R'$, then $\mathcal{F}_{R} \subset \mathcal{F}_{R'}$, and thus we have an increasing sequence of spaces, which limits to the whole symmetric tensor space of tensors of degree $K$:
\begin{equation*}
    \mathcal{F}_R \uparrow_{R\rightarrow \infty} {\rm Sym}^K(\mathbb{R}^N) \cong \mathbb{R}^{N_Q},
\end{equation*}
where $N_Q := \begin{pmatrix} N + K - 1 \\ K\end{pmatrix}$ counts the degrees of freedom of the tensor space.

A question one might ask is ``Given a tensor $Q$, how many tensor rank decompositions of rank $R$ approximate that tensor?''. For this, we define the following quantity
\begin{equation}
    \mathcal{V}_R^{\epsilon}(Q, \Delta) := \int_{\mathcal{F}_R} {\rm d}\Phi_{w} \ \Theta(\Delta - \| Q - \Phi \|^2)\, \e^{-\epsilon\, \sum_{i=1}^{R} \lambda_i^2} ,\label{eq:def_volume}
\end{equation}
where $\Delta$ is the maximum square distance of a tensor rank decomposition $\Phi_{a_1 \ldots a_K}$ to tensor $Q_{a_1 \ldots a_K}$, and $\epsilon$ is a (small) positive parameter. The exponential function is needed to regularise the integral, since even though $\Phi_{a_1 \ldots a_K}$ is bounded, the individual terms $\lambda_i \phi_{a_1}^i \ldots \phi_{a_K}^i$ might not be. This quantity gives an indication for how hard it will be to approximate a tensor $Q$ by a rank-$R$ tensor rank decomposition; a large value means there are many decompositions that approximate the tensor, while a small value might indicate that a larger rank is necessary. 

While~\eqref{eq:def_volume} might contain all information one would want, it is hard to compute. Instead, we will introduce a quantity to make general statements about the configuration space by averaging this quantity over all normalised tensors $\tilde{Q}_{a_1\ldots a_K}$ (such that $\|\tilde{Q}\|^2 = 1$):
\begin{equation}
    \mathcal{Z}_R(\Delta; \epsilon) := \frac{1}{V_{\|Q\|=1}} \int_{\|Q\|=1} {\rm d}\tilde{Q}\ \mathcal{V}_R^{\epsilon}(\tilde{Q}, \Delta).\label{eq:def_Z_epsilon}
\end{equation}
Since the configuration space of $Q$ is isometric to $\mathbb{R}^{N_Q}$, it is possible to move to hyperspherical variables. $\tilde{Q}$ is then given by the angular part of $Q$. Furthermore we have defined $V_{\|Q\|=1} := \int_{\|Q\|=1} {\rm d}\tilde{Q} = \frac{2 \pi ^{N_Q/2}}{\Gamma(N_Q/2)}$. For now we assume the existence of the $\epsilon\rightarrow0^+$ limit of this quantity, such that
\begin{equation}
    \mathcal{Z}_R(\Delta) := \lim_{\epsilon\rightarrow0^+} \mathcal{Z}_R(\Delta; \epsilon).\label{eq:def_Z_limit}
\end{equation}
This limit does not necessarily exist, and it diverges if $R$ is taken too large, as we will show in section~\ref{sec:existence}. In proposition~\ref{prop:Z} in the next section we will obtain an explicit formula for $\mathcal{Z}_R(\Delta)$ found in~\eqref{eq:result_Z_symmetric} under the condition that the following quantity exists:
\begin{equation}
    G_R := \lim_{\epsilon\rightarrow0^+} G_R(\epsilon) := \lim_{\epsilon\rightarrow0^+} \int_{\mathcal{F}_R} {\rm d}\Phi_{w}\ {\rm e}^{- \Phi^2 - \epsilon\, \sum_{i=1}^R \lambda_i^2 }.\label{eq:def_G}
\end{equation}
Note that, since $G_R(\epsilon)$ is a monotonically decreasing positive function of $\epsilon$, 
the $\epsilon\rightarrow 0^+$ limit either diverges or is finite if it is bounded from above. 

This condition presents a peculiar connection to the canonical tensor model. Let us first rewrite
\begin{equation}
    G_R(\epsilon) = \int_{\mathcal{F}_R} \prod_{i=1}^R {\rm d}\lambda_i |\lambda_i|^{w-1} {\rm d}\phi^i {\rm e}^{- \sum_{i,j=1}^R \lambda_i (\phi^i\cdot\phi^j)^K \lambda_j - \epsilon\, \sum_{i=1}^R \lambda_i^2 }\label{eq:G_rewritten},
\end{equation}
where we introduced the usual inner product on $S_+^{N-1} \subset \mathbb{R}^N$
\begin{equation*}
    \phi^i\cdot\phi^j = \sum_{a=1}^N \phi^i_a \phi^j_a,
\end{equation*}
inherited from the tensor space inner product. In~\cite{Lionni:2019rty,Sasakura:2019hql,Obster:2020vfo}, a matrix model was analysed that corresponds to a simplified wave function of the canonical tensor model. The matrix model under consideration had a partition function given by
\begin{equation*}
    Z(k) = \int_{\mathbb{R}^{N R}} \prod_{i=1}^R \prod_{a=1}^N {\rm d}\rho_a^i \, {\rm e}^{-\sum_{i,j=1}^R (\rho^i\cdot\rho^j)^3 - k\sum_{i=1}^R (\rho^i\cdot\rho^i)^3},
\end{equation*}
where $\rho^i \in \mathbb{R}^N$ with the usual Euclidean inner product on $\mathbb{R}^N$. Let us now go to hyperspherical coordinates $(r_i, \phi^i)$ for every $N$-dimensional subspace for every $i$, but instead of taking the usual convention where $r_i \geq 0$ and $\phi^i\in S^{N-1}$, we let $r_i\in\mathbb{R}$ and $\phi^i\in S_+^{N-1}$. Then 
\begin{align}
    Z(k) &= \int \prod_{i=1}^R |r_i|^{N-1} {\rm d}r_i {\rm d}\phi^i \, {\rm e}^{-\sum_{i,j=1}^R (r_i (\phi^i\cdot\phi^j) r_j)^3 - k \sum_{i=1}^R r_i^{6}},\nonumber\\
    &= const. \int_{\mathcal{F}_R} \prod_{i=1}^R |\lambda_i|^{\frac{N-3}{3}} {\rm d}\lambda_i {\rm d}\phi^i {\rm e}^{- \sum_{i,j=1}^R \lambda_i (\phi^i\cdot\phi^j)^3 \lambda_j - k \sum_{i=1}^R\lambda_i^2} \label{eq:Z_CTM_rewritten} ,
\end{align}
where we have substituted $\lambda_i = r_i^3$
and $const.$ is an irrelevant  numerical factor. 
Comparing~\eqref{eq:Z_CTM_rewritten} with~\eqref{eq:G_rewritten} we see that the matrix model studied in the context of the canonical tensor model is a special case of $G_R(\epsilon)$, where $\epsilon=k$, $K=3$ and $w=\frac{N}{K}$. 

Let us now turn to the case of generic (non-symmetric) tensors. We will point out the differences in the treatment and the result, though the derivation in section~\ref{sec:derivation} will be identical. We will still focus on tensors of degree $K$ that act on a multiple of Euclidean vector spaces $V=\mathbb{R}^N$, though generalisations of this could also be considered in a very similar way. A generic rank $R$ tensor is given by
\begin{equation*}
    \Phi_{a_1\ldots a_K}^{(G)} = \sum_{i=1}^{R} \lambda_i {\phi_{a_1}^{(1)}}^{i}\ldots{\phi_{a_K}^{(K)}}^{i},
\end{equation*}
where we again choose $\lambda_i\in\mathbb{R}$ and ${\phi^{(k)}}^i \in S_+^{N-1}$. Note that the main difference here is that the vectors ${\phi^{(k)}}^i$ are independent, and thus the generic configuration space will be bigger:
\begin{equation}
    \mathcal{F}_{R,K}^{(G)} := \mathbb{R}^R\times \underbrace{{S_+^{N-1}}^{\times K}\times \ldots\times {S_+^{N-1}}^{\times K}}_{R\mathrm{\ times}} = \mathbb{R}^{R}\times {S_+^{N-1}}^{\times KR},
\end{equation}
where we now define the measure by the volume element
\begin{equation}
    {\rm d}\Phi_{w}^{(G)} = \prod_{i=1}^{R}|\lambda_i|^{w-1} {\rm d}\lambda_i \prod_{k=1}^K {\rm d}{\phi^{(k)}}^i.
\end{equation}
Note that the degrees of freedom of the tensor space are now $N_Q = N^K$. Under these changes we can again define 
analogues of~\eqref{eq:def_volume},~\eqref{eq:def_Z_limit} and~\eqref{eq:def_G}. With these re-definitions, the general
result~\eqref{eq:result_Z_symmetric} will actually be the same but now for $N_Q=N^K$ and $R$ being 
the generic tensor rank (instead of the symmetric rank).

\section{Derivation of the average volume formula}\label{sec:derivation}
In this section we will derive the result as presented in~\eqref{eq:result_Z_symmetric}. The main steps of the derivation are performed in this section, but for some mathematical subtleties we will refer to appendix~\ref{app:sec:results} and for some general formulae to appendix~\ref{app:sec:formulae}. The general strategy for arriving at~\eqref{eq:result_Z_symmetric} is to take the Laplace transform, extract the dependence on the variables, and take the inverse Laplace transform.

Let us take the Laplace transform of \eqref{eq:def_Z_limit} with \eqref{eq:def_volume} and \eqref{eq:def_Z_epsilon} (see appendix~\ref{app:sec:inverse_laplace}):
\begin{align*}
    \bar{\mathcal{Z}}_R(\gamma) &= \int_{0}^\infty {\rm d}\Delta\, \mathcal{Z}_R(\Delta) \e^{-\gamma \Delta},\\
    &= \frac{1}{V_{\|Q\|=1}} \lim_{\epsilon\rightarrow0^+}\int_{\|Q\|=1} {\rm d} \tilde{Q}\, \int_{\mathcal{F}_R} {\rm d}\Phi_{w}\, \int_{\|\tilde{Q}-\Phi\|^2}^\infty {\rm d}\Delta\, \e^{-\gamma \Delta - \epsilon \sum_{i=1}^R \lambda_i^2},\\
    &= \frac{1}{\gamma V_{\|Q\|=1}} \lim_{\epsilon\rightarrow0^+} \int_{\|Q\|=1} {\rm d} \tilde{Q}\, \int_{\mathcal{F}_R} {\rm d}\Phi_{w} \, \e^{-\gamma (\tilde{Q}-\Phi)^2 - \epsilon \sum_{i=1}^R \lambda_i^2},
\end{align*}
where we have taken the limit out of the $\Delta$ integration. It will be shown below when this is allowed. Let us multiply this quantity by $\gamma$
\begin{equation}
    \bar{Z}_R(\gamma) :=  \gamma \bar{\mathcal{Z}}_R(\gamma) = \frac{1}{ V_{\|Q\|=1}}\lim_{\epsilon \rightarrow 0^+}\int_{\|Q\|=1} {\rm d}\tilde{Q}\, \int_{\mathcal{F}_R} {\rm d}\Phi_{w}\ \e^{-\gamma(\tilde{Q} - \Phi)^2  - \epsilon \sum_{i=1}^R \lambda_i^2}.\label{eq:def_Z_gamma}
\end{equation}
This will be undone again at a later stage. For later use we will also define the quantity depending on $\epsilon$ without taking the limit:
\begin{equation}
    \bar{Z}_R(\gamma; \epsilon) := \frac{1}{V_{\|Q\|=1}} \int_{\|Q\|=1} {\rm d}\tilde{Q}\, \int_{\mathcal{F}_R} {\rm d}\Phi_{w}\, \e^{-\gamma (\tilde{Q}-\Phi)^2 - \epsilon \sum_{i=1}^R \lambda_i^2}.\label{eq:def_Z_gamma_epsilon}
\end{equation}
As an aside; recall that for the Laplace transform multiplication by $\gamma$ corresponds to taking the derivative in $\Delta$-space. This means that we effectively now have a definition of the Laplace transform of the distributive quantity 
\begin{equation*}
    Z_R(\Delta; \epsilon) := \int_{\|Q\|=1} {\rm d}\tilde{Q}\  \mathcal{D}V_R^\epsilon(\tilde{Q}, \Delta) := \int_{\|Q\|=1} {\rm d}\tilde{Q} \,\int_{\mathcal{F}_R} {\rm d}\Phi_{w}\ \delta(\Delta - \| \tilde{Q} - \Phi \|^2)\, \e^{ - \epsilon \sum_{i=1}^R \lambda_i^2},
\end{equation*}
where $\delta(x)\ (x\in \mathbb{R})$ is the delta distribution, 
assuming that~\eqref{eq:def_Z_gamma} is well-defined (which will be shown below for the aforementioned assumption). 

We will now present the first main result that will be necessary. 
\begin{proposition}\label{prop:barZ_result}
Given that~\eqref{eq:def_G} is finite,~\eqref{eq:def_Z_gamma} is finite and given by
\begin{equation*}
    \bar{Z}_R(\gamma) = G_R\, \gamma^{-\frac{w\,R}{2}}{}_1F_1\left(\frac{N_Q-w\,R}{2}, \frac{N_Q}{2}, - \gamma \right).
\end{equation*}
\end{proposition}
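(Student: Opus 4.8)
The plan is to do the two integrations in \eqref{eq:def_Z_gamma_epsilon} in sequence: first the angular integral over $\tilde Q$ on the unit sphere $S^{N_Q-1}$, and then the integral over $\mathcal F_R$, deferring the $\epsilon\to 0^+$ limit to the end. Since $\|\tilde Q\|=1$, expand $(\tilde Q-\Phi)^2=1-2\,\tilde Q\cdot\Phi+\Phi^2$, so that the only $\tilde Q$-dependence sits in $\e^{2\gamma\,\tilde Q\cdot\Phi}$. The normalised spherical average of this factor depends only on $\gamma$ and $\|\Phi\|$, and a standard computation (expand in powers of $\tilde Q\cdot\Phi$ and use $\langle x_1^{2m}\rangle_{S^{N_Q-1}}=\tfrac{(2m)!}{4^m m!\,(N_Q/2)_m}$, the odd moments vanishing) gives
\begin{equation*}
    \frac{1}{V_{\|Q\|=1}}\int_{\|Q\|=1}{\rm d}\tilde Q\,\e^{2\gamma\,\tilde Q\cdot\Phi}={}_0F_1\!\left(\tfrac{N_Q}{2};\,\gamma^2\Phi^2\right).
\end{equation*}
Hence $\bar Z_R(\gamma;\epsilon)=\e^{-\gamma}\int_{\mathcal F_R}{\rm d}\Phi_{w}\,\e^{-\gamma\Phi^2-\epsilon\sum_i\lambda_i^2}\,{}_0F_1\!\left(\tfrac{N_Q}{2};\,\gamma^2\Phi^2\right)$, and the problem is reduced to evaluating this single integral over $\mathcal F_R$.

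Next I would expand ${}_0F_1$ as its power series in $\gamma^2\Phi^2$ and exchange summation with integration, reducing everything to the moments $\int_{\mathcal F_R}{\rm d}\Phi_{w}\,(\Phi^2)^m\,\e^{-\gamma\Phi^2-\epsilon\sum_i\lambda_i^2}$. These moments are controlled by a single scaling argument. Because $\Phi^2=\sum_{i,j}\lambda_i(\phi^i\cdot\phi^j)^K\lambda_j$ is homogeneous of degree two in the $\lambda_i$ and ${\rm d}\Phi_{w}$ scales as $c^{wR}$ under $\lambda_i\mapsto c\lambda_i$, the substitution $\lambda_i\mapsto\lambda_i/\sqrt\gamma$ turns $\e^{-\gamma\Phi^2}$ into $\e^{-\Phi^2}$, turns $(\Phi^2)^m$ into $\gamma^{-m}(\Phi^2)^m$, and sends $\epsilon\mapsto\epsilon/\gamma$. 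Taking $\epsilon\to0^+$ and using the homogeneity of $\e^{-t\Phi^2}$ (which gives $\int{\rm d}\Phi_w\,\e^{-t\Phi^2}=t^{-wR/2}G_R$, whence differentiating $m$ times in $t$ at $t=1$ yields $\int{\rm d}\Phi_w\,(\Phi^2)^m\e^{-\Phi^2}=(wR/2)_m\,G_R$), I obtain
\begin{equation*}
    \lim_{\epsilon\to0^+}\int_{\mathcal F_R}{\rm d}\Phi_{w}\,(\Phi^2)^m\,\e^{-\gamma\Phi^2-\epsilon\sum_i\lambda_i^2}=\left(\tfrac{wR}{2}\right)_m\gamma^{-\frac{wR}{2}-m}\,G_R.
\end{equation*}
Resumming the ${}_0F_1$ series, the powers combine as $\gamma^{2m}\gamma^{-wR/2-m}=\gamma^{-wR/2}\gamma^{m}$, producing $G_R\,\gamma^{-wR/2}\,{}_1F_1\!\left(\tfrac{wR}{2},\tfrac{N_Q}{2},\gamma\right)$. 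Finally, applying Kummer's transformation ${}_1F_1(a,b,z)=\e^{z}\,{}_1F_1(b-a,b,-z)$ with the leftover prefactor $\e^{-\gamma}$ gives exactly $G_R\,\gamma^{-\frac{wR}{2}}\,{}_1F_1\!\left(\tfrac{N_Q-wR}{2},\tfrac{N_Q}{2},-\gamma\right)$.

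The main obstacle is analytic rather than algebraic: justifying the interchange of the $\epsilon\to0^+$ limit with the two integrations and with the infinite summation coming from the ${}_0F_1$ expansion. The saving feature here is positivity. The integrand $\e^{-\gamma\Phi^2-\epsilon\sum_i\lambda_i^2}\,{}_0F_1\!\left(\tfrac{N_Q}{2};\gamma^2\Phi^2\right)$ and every term $(\Phi^2)^m$ are nonnegative, so Tonelli's theorem permits the sum--integral exchange unconditionally, while the stated monotonicity of $G_R(\epsilon)$ in $\epsilon$ lets me pass the $\epsilon\to0^+$ limit through each moment by monotone convergence. The one genuine input is the hypothesis that $G_R$ is finite: it guarantees that every moment is finite and that the resulting ${}_1F_1$ series converges (the confluent hypergeometric function being entire), which is precisely what makes $\bar Z_R(\gamma)$ well-defined and legitimises having taken the limit outside the $\Delta$-integration earlier. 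I would relegate the detailed convergence bookkeeping to the appendix and keep these positivity/monotonicity observations as the backbone of the rigorous justification.
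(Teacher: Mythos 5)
Your proposal is correct, but it takes a genuinely different route from the paper. The paper proves finiteness separately (via the reverse triangle inequality, bounding $\bar{Z}_R(\gamma;\epsilon)$ by $(\gamma A)^{-\frac{w R}{2}}\e^{\gamma\frac{A}{1-A}}G_R\left(\frac{\epsilon}{\gamma A}\right)$), and then obtains the closed form indirectly: it introduces the auxiliary quantity $Y(\alpha,\gamma)$ of~\eqref{eq:def_Y_alpha_gamma} integrated over the \emph{whole} tensor space, evaluates it once by completing the square (a Gaussian integral, yielding an expression in $G_R$) and once in polar coordinates (yielding a Laplace transform of $x^{\frac{N_Q+wR}{2}-1}\bar{Z}_R(x)$ in the radial variable), and equates the two, extracting $\bar{Z}_R$ by an inverse Laplace transform~\eqref{eq:appendix:InverseLaplace_1}; this requires the two appendix lemmas~\ref{lemma:Y_alpha_gamma_finite} and~\ref{lemma:Y_integral_limit} to control $Y$ and the limit--integral interchange. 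You instead compute $\bar{Z}_R(\gamma;\epsilon)$ head-on: the spherical average produces the ${}_0F_1$ (Bessel-type) kernel, the $\lambda_i\mapsto\lambda_i/\sqrt{\gamma}$ rescaling plus differentiation of $t^{-wR/2}G_R$ reduces every moment to $\left(\frac{wR}{2}\right)_m G_R$, and resummation plus Kummer's transformation~\eqref{eq:app:kummer_theorem} gives the result, with finiteness falling out of the computation rather than being established beforehand. Note that your rescaling is exactly the homogeneity the paper exploits when it substitutes $\lambda_i\rightarrow|Q|\lambda_i$ inside $Y$, so the two proofs share the same essential mechanism; what yours buys is self-containedness (no auxiliary integral, no inverse-Laplace tables, and interchanges justified purely by Tonelli and monotone convergence), while the paper's detour through $Y$ and the Laplace domain keeps all special-function manipulation confined to table look-ups and sets up the same machinery it reuses to invert~\eqref{eq:result_bar_Z} in proposition~\ref{prop:Z}. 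One small caveat: your closing claim that the computation ``legitimises having taken the limit outside the $\Delta$-integration earlier'' is not actually part of the proposition and would still need the separate argument the paper defers to its lemmas; this does not affect the validity of your proof of the stated result.
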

\begin{proof}
    Let us prove this proposition in the following two steps.

    {\bf Step one}: $\bar{Z}_R(\gamma)$ is finite if $G_R$ is finite.\\
    First let us remark that the integrand in~\eqref{eq:def_Z_gamma_epsilon} is positive, and thus for $\bar{Z}_R(\gamma)$ to be finite we should show that $\bar{Z}_R(\gamma)<\infty$. Furthermore, because of the reverse triangle inequality we have the inequality
    \begin{equation*}
        \|Q - \Phi\|^2 \geq ( \|Q\| - \|\Phi\| )^2, 
    \end{equation*}
    and from $(x-y)^2 = A y^2 - \frac{A}{1-A} x^2 + (1-A)\left(y-\frac{x}{1-A}\right)^2$ for $x, y \in \mathbb{R}$ and $0 < A < 1$ we have the inequality 
    \begin{equation*}
        \left(\|Q\| - \|\Phi\|\right)^2 \geq A \|\Phi\|^2 - \frac{A}{1-A}\|Q\|^2.
    \end{equation*}
    Putting this together, we find that
    \begin{align}
        \bar{Z}_R(\gamma; \epsilon) &= \frac{1}{V_{\|Q\|=1}} \int_{\|Q\|=1} {\rm d}\tilde{Q}\, \int_{\mathcal{F}_R}{\rm d}\Phi_{w}\,\e^{-\gamma (\tilde{Q} - \Phi)^2 - \epsilon \sum_{i=1}^R \lambda_i^2},\nonumber\\
        &\leq \frac{1}{V_{\|Q\|=1}} \int_{\|Q\|=1} {\rm d}\tilde{Q}\, \int_{\mathcal{F}_R}{\rm d}\Phi_{w}\,\e^{-\gamma(\|\tilde{Q}\| - \|\Phi\|)^2 - \epsilon\sum_{i=1}^R \lambda_i^2},\nonumber\\
        &\leq \frac{1}{V_{\|Q\|=1}} \int_{\|Q\|=1} {\rm d}\tilde{Q}\, \e^{\gamma \frac{A}{1-A}\tilde{Q}^2} \int_{\mathcal{F}_R}{\rm d}\Phi_{w}\,\e^{ -\gamma A \Phi^2- \epsilon\sum_{i=1}^R \lambda_i^2},\nonumber\\
        &= (\gamma A)^{-\frac{w\, R}{2}} \e^{\gamma \frac{A}{1-A}} G_R\left(\frac{\epsilon}{\gamma A}\right).\label{eq:Z_R_bound}
    \end{align}
    This means that, as long as $G_R = \lim_{\epsilon\rightarrow0^+} G_R(\epsilon)$ is finite, $\bar{Z}_R(\gamma)=\lim_{\epsilon\rightarrow0^+} \bar{Z}_R(\gamma; \epsilon)$ is finite since we have a finite upper bound. Moreover, it converges since it monotonically increases with $\epsilon\rightarrow0^+$ and it is bounded. 
    
    {\bf Step two}: Find the closed form.\\
    Let us introduce the quantity
    \begin{equation}
        Y(\alpha, \gamma) := \lim_{\epsilon\rightarrow0^+} \int_{\mathbb{R}^{N_Q}} {\rm d}Q\, \int_{\mathcal{F}_R} {\rm d}\Phi_{w}\, \e^{-\alpha Q^2 - \gamma (Q - \Phi)^2 - \epsilon \sum_{i=1}^R \lambda_i^2}.\label{eq:def_Y_alpha_gamma}
    \end{equation}
    Note that in this quantity, $Q$ is defined over the whole tensor space $\mathbb{R}^{N_Q}$, so not only the normalised tensors. In the appendix, lemma~\ref{lemma:Y_alpha_gamma_finite} shows that this quantity is finite under the same assumption that $G_R$ is finite. 
    
    We can rewrite~\eqref{eq:def_Y_alpha_gamma} in terms of $G_R$ as follows
    \begin{align}
        Y(\alpha, \gamma) &= \lim_{\epsilon\rightarrow0^+} \int_{\mathbb{R}^{N_Q}} {\rm d}Q\, \int_{\mathcal{F}_R}{\rm d}\Phi_{w}\,\e^{-(\alpha+\gamma)\left(Q - \frac{\gamma}{\alpha+\gamma} \Phi \right)^2 - \frac{\alpha\gamma}{\alpha+\gamma} \Phi^2 -\epsilon\sum_{i=1}^R \lambda_i^2 },\nonumber\\
        &= \left(\frac{\pi}{\alpha + \gamma}\right)^{\frac{N_Q}{2}} \lim_{\epsilon\rightarrow0^+} \int_{\mathcal{F}_R} {\rm d}\Phi_w \ \e^{-\frac{\alpha \gamma}{\alpha+\gamma} \Phi^2  -\epsilon\sum_{i=1}^R \lambda_i^2},\nonumber\\
        &= \left(\frac{\pi}{\alpha + \gamma}\right)^{\frac{N_Q}{2}} \left(\frac{\alpha+\gamma}{\alpha\gamma}\right)^{\frac{w\,R}{2}} G_{R},\nonumber\\
        &= \pi^{N_Q/2} \gamma^{-\frac{N_Q + w\,R}{2}} \left(1+t\right)^{-\frac{N_Q-w\,R}{2}} t^{-\frac{w\,R}{2}} G_{R},\label{eq:Y_alpha_gamma_lhs}
    \end{align}
    where $t\equiv\frac{\alpha}{\gamma}$. We can also relate~\eqref{eq:def_Y_alpha_gamma} to $\bar{Z}_R(\gamma)$ by using polar coordinates for $Q \rightarrow (|Q|, \tilde{Q})$:
    \begin{align}
       Y(\alpha,\gamma) &= \lim_{\epsilon\rightarrow0^+} \int_{\mathbb{R}^{N_Q}} {\rm d}|Q|\, |Q|^{N_Q-1} \, {\rm d}\tilde{Q}\, \int_{\mathcal{F}_R} {\rm d}\Phi_{w}\, \e^{-\alpha |Q|^2 - \gamma (|Q| \tilde{Q} - \Phi)^2 -\epsilon\sum_{i=1}^R \lambda_i^2},\nonumber\\
    &= V_{\|Q\|=1} \lim_{\epsilon\rightarrow0^+}\int_{0}^\infty {\rm d}|Q|\, |Q|^{N_Q - 1 + w\,R} \e^{-\alpha |Q|^2} \bar{Z}_R(\gamma |Q|^2; \epsilon |Q|^2) ,\nonumber\\
    &=  \frac{1}{2} V_{\|Q\|=1} \gamma^{-\frac{N_Q + w\,R}{2}} \lim_{\epsilon\rightarrow0^+}\int_{0}^{\infty} {\rm d}x\, x^{\frac{N_Q + w\,R}{2}-1} \bar{Z}_R(x; \epsilon x/\gamma) \e^{-t\,x},\nonumber\\
    &= \frac{1}{2}V_{\|Q\|=1}\gamma^{-\frac{N_Q+w\,R}{2}} \int_{0}^{\infty} {\rm d}x\, x^{\frac{N_Q+w\,R}{2}-1} \bar{Z}_R(x) \e^{-t\,x}.\label{eq:Y_alpha_gamma_limits}
    \end{align}
    Here, in the first step we rescaled $\lambda_i\rightarrow|Q|\lambda_i$, in the second step we introduced a new integration variable $x\equiv \gamma |Q|^2$, and in the final step we took the limit inside the integral as is proven to be allowed in the appendix lemma~\ref{lemma:Y_integral_limit}. Note the appearance of $\bar{Z}_R(\gamma; \epsilon)$ as defined in~\eqref{eq:def_Z_gamma_epsilon}.
    
    By equating~\eqref{eq:Y_alpha_gamma_lhs} and~\eqref{eq:Y_alpha_gamma_limits}, we now arrive at the relation
    \begin{equation*}
        \int_{0}^{\infty} {\rm d}x \, x^{\frac{N_Q+w\,R}{2}-1} \bar{Z}_R(x) \e^{-t\,x} =  \Gamma[N_Q/2]\, G_R\, (1+t)^{-\frac{N_Q-w\,R}{2}} t^{-\frac{w\,R}{2}}.
    \end{equation*}
    The crucial observation now is that the left-hand side is the Laplace transform of the function $x^{\frac{N_Q+w\,R}{2}-1} \bar{Z}_R(x)$. Hence, by taking the inverse Laplace transform of the right-hand side and using~\eqref{eq:appendix:InverseLaplace_1} in the appendix, we find
    \begin{equation*}
        \bar{Z}_R(x) = G_R \, x^{-\frac{w\,R}{2}} {}_1F_1\left(\frac{N_Q-w\,R}{2}, \frac{N_Q}{2}, -x \right).
    \end{equation*}
\end{proof}

Having obtained the result above, we undo the operation done in~\eqref{eq:def_Z_gamma}:
\begin{equation}
    \bar{\mathcal{Z}}_R(\gamma) = G_R \gamma^{-\frac{w\,R}{2} - 1 } {}_1F_1\left(\frac{N_Q-w\,R}{2}, \frac{N_Q}{2}, -\gamma\right).\label{eq:result_bar_Z}
\end{equation}
The main remaining task to find the central result of this paper, an expression for $\mathcal{Z}_R(\Delta)$, is to take the inverse Laplace transform of this function. This is performed in the proposition below.

\begin{proposition}\label{prop:Z}
    Given that $G_R$ in~\eqref{eq:def_G} is finite, $\mathcal{Z}_R(\Delta)$, as defined in~\eqref{eq:def_Z_limit}, is given by
\begin{equation}
   \mathcal{Z}_R(\Delta) = \frac{2G_R}{\Gamma\left[\frac{w\,R}{2}\right]} \cdot \begin{cases}
    \frac{1}{N_Q}\Delta^{\frac{N_Q}{2}} {}_2F_1\left(1-\frac{w\,R}{2}, \frac{N_Q-w\,R}{2}, 1+\frac{N_Q}{2}, \Delta\right), & \Delta\leq1,\\
    \frac{1}{w\,R}\Delta^{\frac{w\,R}{2}} {}_2F_1\left(-\frac{w\,R}{2}, \frac{N_Q-w\,R}{2}, \frac{N_Q}{2}, 1/\Delta\right), & \Delta\geq1,
  \end{cases}\label{eq:result_Z_symmetric}
\end{equation}
\end{proposition}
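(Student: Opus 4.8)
The plan is to observe that equation~\eqref{eq:result_bar_Z} already displays $\bar{\mathcal{Z}}_R(\gamma)$ as the Laplace transform of $\mathcal{Z}_R(\Delta)$, so the whole statement reduces to computing a single inverse Laplace transform, that of $G_R\,\gamma^{-\sigma}\,{}_1F_1(a,b,-\gamma)$, where I abbreviate $a=\frac{N_Q-wR}{2}$, $b=\frac{N_Q}{2}$ and $\sigma=1+\frac{wR}{2}$, keeping in mind the relations $b-a=\sigma-1=\frac{wR}{2}$ and $a+\sigma=1+\frac{N_Q}{2}$. Rather than inverting the Kummer series term by term — which would converge only on one side of the critical point and would hide the piecewise shape — I would substitute Kummer's Euler-type integral representation $ {}_1F_1(a,b,-\gamma)=\frac{\Gamma(b)}{\Gamma(a)\Gamma(b-a)}\int_0^1 e^{-\gamma t}\,t^{a-1}(1-t)^{b-a-1}\,dt$, which is valid precisely when $b>a>0$, i.e. in the window $0<wR<N_Q$.

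The crux is then the elementary shift rule. Since $\mathcal{L}^{-1}[\gamma^{-\sigma}](\Delta)=\Delta^{\sigma-1}/\Gamma(\sigma)$, the factor $e^{-\gamma t}$ produces a Heaviside cutoff, $\mathcal{L}^{-1}[\gamma^{-\sigma}e^{-\gamma t}](\Delta)=\frac{(\Delta-t)^{\sigma-1}}{\Gamma(\sigma)}\,\Theta(\Delta-t)$. Interchanging the inversion with the $t$-integral yields $\mathcal{Z}_R(\Delta)=\frac{G_R\,\Gamma(b)}{\Gamma(a)\Gamma(b-a)\Gamma(\sigma)}\int_0^1 (\Delta-t)^{\sigma-1}\,\Theta(\Delta-t)\,t^{a-1}(1-t)^{b-a-1}\,dt$. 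It is exactly this $\Theta(\Delta-t)$ that manufactures the two branches of~\eqref{eq:result_Z_symmetric}: for $\Delta\geq 1$ the step function is identically one on all of $[0,1]$, while for $\Delta\leq 1$ it truncates the upper limit to $\Delta$. Thus the critical value $\Delta=1$ is nothing but the upper endpoint of Kummer's integral.

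To close each branch I would recognise the surviving integral as Euler's representation of ${}_2F_1$. For $\Delta\geq 1$, extracting $\Delta^{\sigma-1}$ and reading off $(1-t/\Delta)^{\sigma-1}$ identifies it with ${}_2F_1\big(1-\sigma,a,b,1/\Delta\big)={}_2F_1\big(-\tfrac{wR}{2},\tfrac{N_Q-wR}{2},\tfrac{N_Q}{2},1/\Delta\big)$; for $\Delta\leq 1$, the substitution $t=\Delta s$ gives $\int_0^1 s^{a-1}(1-s)^{\sigma-1}(1-\Delta s)^{b-a-1}\,ds$, which is ${}_2F_1\big(1-(b-a),a,a+\sigma,\Delta\big)={}_2F_1\big(1-\tfrac{wR}{2},\tfrac{N_Q-wR}{2},1+\tfrac{N_Q}{2},\Delta\big)$. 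Collapsing the gamma prefactors with $\Gamma(1+x)=x\Gamma(x)$ — specifically $\Gamma(\sigma)=\tfrac{wR}{2}\Gamma(\tfrac{wR}{2})$ and $\Gamma(a+\sigma)=\tfrac{N_Q}{2}\Gamma(\tfrac{N_Q}{2})$ — reduces the two constants to $\frac{2G_R}{wR\,\Gamma(wR/2)}$ and $\frac{2G_R}{N_Q\,\Gamma(wR/2)}$, matching~\eqref{eq:result_Z_symmetric} on the nose.

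I expect the main obstacle to be none of these manipulations but the analytic justification of interchanging the inverse-Laplace (Bromwich) contour with the $t$-integration, together with honest bookkeeping of the parameter range $0<wR<N_Q$ demanded by Kummer's representation; for parameter values where the Euler integrals fail to converge outright one must fall back on analytic continuation in $a$ and $b$. These are presumably the ``mathematical subtleties'' deferred to appendix~\ref{app:sec:results}, and I would settle them there, leaning on the assumed finiteness of $G_R$ and on dominated-convergence estimates of the kind already produced in~\eqref{eq:Z_R_bound}.
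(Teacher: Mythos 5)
Your proof is correct and all constants check out: the integral you reach, with kernel $(\Delta-t)^{\sigma-1}\,t^{a-1}(1-t)^{b-a-1}$ cut off by $\Theta(\Delta-t)$ on $[0,1]$, is exactly the paper's intermediate expression, and your Euler-integral identifications reproduce both branches of \eqref{eq:result_Z_symmetric} with the right prefactors. The difference from the paper is one of packaging rather than substance. The paper rewrites $\bar{\mathcal{Z}}_R(\gamma)$ via Kummer's transformation as a Whittaker function, splits off the factor $\gamma^{-\frac{w\,R}{2}-1}$, inverts the two factors separately — the Whittaker factor by the appendix formula \eqref{eq:appendix:InverseLaplace_3}, whose inverse is a beta density supported on $[0,1]$ — and glues them with the convolution theorem \eqref{eq:appendix:ProductFormula}. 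You bypass the Whittaker detour by substituting Kummer's integral representation \eqref{eq:app:confluent_integral} directly into \eqref{eq:result_bar_Z} and applying the shift rule; since the paper proves \eqref{eq:appendix:InverseLaplace_3} from that very representation, the two arguments rest on identical facts, and yours is a streamlining rather than a new method. What it buys is self-containedness and a transparent origin for the breakpoint: $\Delta=1$ is the endpoint of the Kummer integral, i.e.\ the edge of the beta density's support, which the convolution-theorem phrasing leaves implicit. Two small remarks. First, for the interchange of the Bromwich inversion with the $t$-integral that worries you, it is cleaner to argue forwards: compute the Laplace transform of your claimed piecewise function by Tonelli (the integrand is positive) and check that it equals \eqref{eq:result_bar_Z}, then appeal to uniqueness of the Laplace transform for continuous, monotone functions — precisely the null-function argument the paper makes in appendix~\ref{app:sec:inverse_laplace}. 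Second, your restriction $0<w\,R<N_Q$ is not a defect peculiar to your route — the paper's beta density needs both exponents positive as well — and the endpoint $w\,R=N_Q$ is immediate since ${}_1F_1(0,b;-\gamma)=1$ makes the inversion elementary there.
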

\begin{proof}
    If~\eqref{eq:def_G} is finite, and thus~\eqref{eq:result_bar_Z} exists and is finite, we need to perform the inverse Laplace transform of~\eqref{eq:result_bar_Z} in order to prove~\eqref{eq:result_Z_symmetric}. This may be done as follows. First we write~\eqref{eq:result_bar_Z} in terms of one of the Whittaker functions
    \begin{equation*}
        \bar{\mathcal{Z}}_R(\gamma) =  G_R \, \gamma^{-\frac{w\,R}{2} - \frac{N_Q}{4} - 1} \e^{-\frac{\gamma}{2}} \, M_{\frac{N_Q}{4} - \frac{w\,R}{2} , \frac{N_Q}{4}- \frac{1}{2}}(\gamma),
    \end{equation*}
    where we used Kummer's transformation~\eqref{eq:app:kummer_theorem}, and $M_{\mu,\nu}(\gamma)$ is one of the Whittaker functions which may be found in \eqref{eq:app:confl_whittaker}
    in the appendix. Let us rewrite
    \begin{equation*}
        \bar{\mathcal{Z}}_R(\gamma) =  G_R \, \underbrace{ \gamma^{- \frac{N_Q}{4}} \e^{-\frac{\gamma}{2}} \, M_{\frac{N_Q}{4} - \frac{w\,R}{2} , \frac{N_Q}{4}- \frac{1}{2}}(\gamma)}_{L[f]} \, \underbrace{\gamma^{-\frac{w\,R}{2} - 1}}_{L[g]},
    \end{equation*}
    such that we can now use the formula from the convolution theorem which can be found in \eqref{eq:appendix:ProductFormula}
    in the appendix. Let us first find the inverse Laplace transform of $L[g]$, which may be found using formula~\eqref{eq:appendix:InverseLaplace_2} from the appendix
    \begin{equation*}
        g(t) = \frac{t^{\frac{w\,R}{2}}}{\Gamma\left[\frac{w\,R}{2} + 1\right]}.
    \end{equation*}
    The inverse Laplace transform of $L[f]$ may be found using formula~\eqref{eq:appendix:InverseLaplace_3} from the appendix
    \begin{equation*}
        f(t) = \begin{cases}
            \beta\left(\frac{w\,R}{2}, \frac{N_Q-w\,R}{2}\right)^{-1} t^{\frac{N_Q-w\,R}{2} - 1} (1-t)^{\frac{w\,R}{2}-1}, & 0 < t < 1, \\
            0 , & \text{otherwise} ,
        \end{cases}
    \end{equation*}
    where $\beta$ is the beta-function defined in~\eqref{eq:app:beta_f}. Combining these results with the convolution product formula~\eqref{eq:appendix:ProductFormula} in the appendix yields
    \begin{equation*}
        \mathcal{Z}_R(\Delta) = \begin{cases}
           c_{R} \int_{0}^{\Delta} q^{\frac{N_Q-w\,R}{2}-1} (1-q)^{\frac{w\,R}{2}-1} (\Delta-q)^{\frac{w\,R}{2}}\, {\rm d}q , & \Delta\leq1, \\
           c_{R} \int_{0}^{1} q^{\frac{N_Q-w\,R}{2}-1} (1-q)^{\frac{w\,R}{2}-1} (\Delta-q)^{\frac{w\,R}{2}}\, {\rm d}q , & \Delta\geq1 ,
        \end{cases}
    \end{equation*}
    where $c_{R} \equiv \frac{G_R}{\Gamma\left[\frac{w\,R}{2}+1\right] \beta\left(\frac{w\,R}{2}, \frac{N_Q-w\,R}{2}\right)}$. Let us focus on the $\Delta\geq1$ case first. Using~\eqref{eq:appendix:hypergeometric_integral} we find
    \begin{align*}
        \mathcal{Z}_R(\Delta) &= c_{R}\, \Delta^{\frac{w\,R}{2}} \int_{0}^1 q^{\frac{N_Q-w\,R}{2}-1} (1-q)^{\frac{w\,R}{2}-1} (1-q/\Delta)^{\frac{w\,R}{2}} {\rm d}q,\nonumber\\
        &= \frac{G_R}{\Gamma\left[\frac{w\,R}{2}+1\right]} \Delta^{\frac{w\,R}{2}} {}_2F_1\left( -\frac{w\,R}{2}, \frac{N_Q-w\,R}{2}, \frac{N_Q}{2}, \frac{1}{\Delta} \right).
    \end{align*}
    For $\Delta\leq 1$ we find
    \begin{align*}
        \mathcal{Z}_R(\Delta) &= c_{R} \int_{0}^\Delta q^{\frac{N_Q-w\,R}{2}-1} (1-q)^{\frac{w\,R}{2}-1} (\Delta - q)^{\frac{w\,R}{2}}\, {\rm d}q,\nonumber\\
        &= c_{R}\, \Delta^{\frac{N_Q}{2}} \int_{0}^1 q^{\frac{N_Q-w\,R}{2}-1} (1-\Delta q)^{\frac{w\,R}{2}-1} (1 - q)^{\frac{w\,R}{2}}\, {\rm d}q,\nonumber\\
        &= \frac{G_R}{\Gamma\left[\frac{w\,R}{2}\right] \, \frac{N_Q}{2}} \Delta^{\frac{N_Q}{2}} {}_2F_1\left( 1-\frac{w\,R}{2}, \frac{N_Q-w\,R}{2}, \frac{N_Q}{2}+1, \Delta \right).
    \end{align*}
    where we changed integration variables in the first step to $q' = q/\Delta$. This result is in accord with~\eqref{eq:result_Z_symmetric}.
\end{proof}

This concludes the proof of~\eqref{eq:result_Z_symmetric}. As mentioned before, for generic tensors the derivation is exactly identical. The main difference now is that the number of degrees of freedom $N_Q$ is different for this tensor space. What are left are to determine the range of $R$ for which $G_R$ is finite and the value of $G_R$. This will be done in section~\ref{sec:existence}.

Before we finish this section, let us demonstrate some properties of this function. First let us note that the parameters $R$ and $w$ always come together, even though they seemingly are unrelated when inspecting~\eqref{eq:def_volume}. This can be understood by the fact that every term in the tensor rank decomposition comes with a weight given by $\lambda_i$. However, in the measure we count every unit of $\lambda$ with a power of $w$, so we have $R$ terms that each scale with a factor of $w$, explaining why $R$ and $w$ always come together. 

Now we take a look at some special values of the function. Starting with the case where $w\,R/2 = 1$, we have the situation that, for $\Delta \leq 1$ the hypergeometric part of the function will be constant because the first argument is zero. For $\Delta \geq 1$, we see that 
the function will be of the form $1+\frac{N_Q}{2}(\Delta-1)$. So the full function will simplify to
\begin{equation*}
    \mathcal{Z}_R(\Delta) \propto \begin{cases}
           \Delta^{N_Q/2}, & \Delta\leq 1, \\
           1+\frac{N_Q}{2}(\Delta-1) , & \Delta\geq1 ,
        \end{cases}
\end{equation*}
making the function linear for larger $\Delta$. Let us try another simple case, namely for $w\,R = N_Q$. In this case, the hypergeometric part becomes a constant everywhere, and we get
\begin{equation*}
    \mathcal{Z}_R(\Delta) \propto \Delta^{N_Q/2}.
\end{equation*}
Examples of the special values above, and others, are plotted in figure~\ref{fig:functionalFormZ}.

\begin{figure}
    \centering
    \begin{minipage}{0.48\textwidth}
        \centering
        \includegraphics[width=0.95\textwidth]{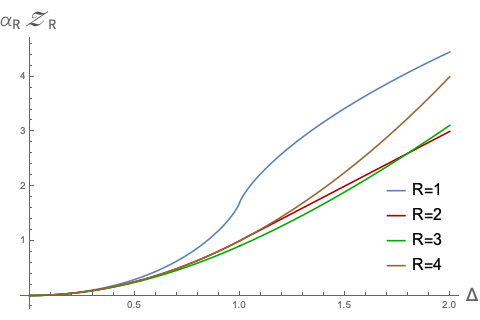}
    \end{minipage}
    \begin{minipage}{0.48\textwidth}
        \centering
        \includegraphics[width=0.95\textwidth]{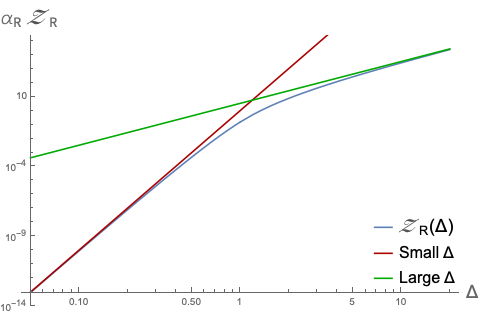}
    \end{minipage}
    \caption{On the left: $\mathcal{Z}_R(\Delta)$ for symmetric tensors with $\Delta$ running from $\Delta=0$ to $2$, where $K=3$, $N=2$ and $w=1$. 
    %As discussed in the text, we see that for $w\,R/2 = 1$ indeed we get a simple power law for small $\Delta$, and linear behaviour for larger $\Delta$. For $w\,R=N_Q=4$ we indeed see that the function becomes a simple powerlaw on the whole domain. 
    On the right: The limiting behaviour of $\mathcal{Z}_R(\Delta)$ for $K=3, N=4, w=2, R=3$, again for symmetric tensors. The blue curve represents~\eqref{eq:result_Z_symmetric}, the red line the small $\Delta$ behaviour of~\eqref{eq:limit_small_delta}, and the green line the large $\Delta$ behaviour of~\eqref{eq:limit_large_delta}. $\alpha_R\equiv \frac{\Gamma\left[\frac{w\,R}{2}\right]N_Q}{2 G_R}$ is a normalisation factor.}
    \label{fig:functionalFormZ}
\end{figure}
Furthermore, let us focus on some of the limiting behaviour of the function. For $\Delta\rightarrow0^+$, the hypergeometric part is approximately a constant, and we see
\begin{equation}\label{eq:limit_small_delta}
    \lim_{\Delta\rightarrow0^+}\mathcal{Z}_R(\Delta) \propto \Delta^{\frac{N_Q}{2}}.
\end{equation}
Similarly, for $\Delta\rightarrow\infty$, the hypergeometric part is constant and the function tends to
\begin{equation}\label{eq:limit_large_delta}
    \lim_{\Delta\rightarrow\infty} \mathcal{Z}_R(\Delta) \propto \Delta^{\frac{w\,R}{2}}.
\end{equation}
In some sense, the hypergeometric part of the function interpolates between these two extremes. This is also shown in figure~\ref{fig:functionalFormZ}.

\begin{figure}
    \centering
    \begin{minipage}{0.47\textwidth}
        \includegraphics[width=0.95\textwidth]{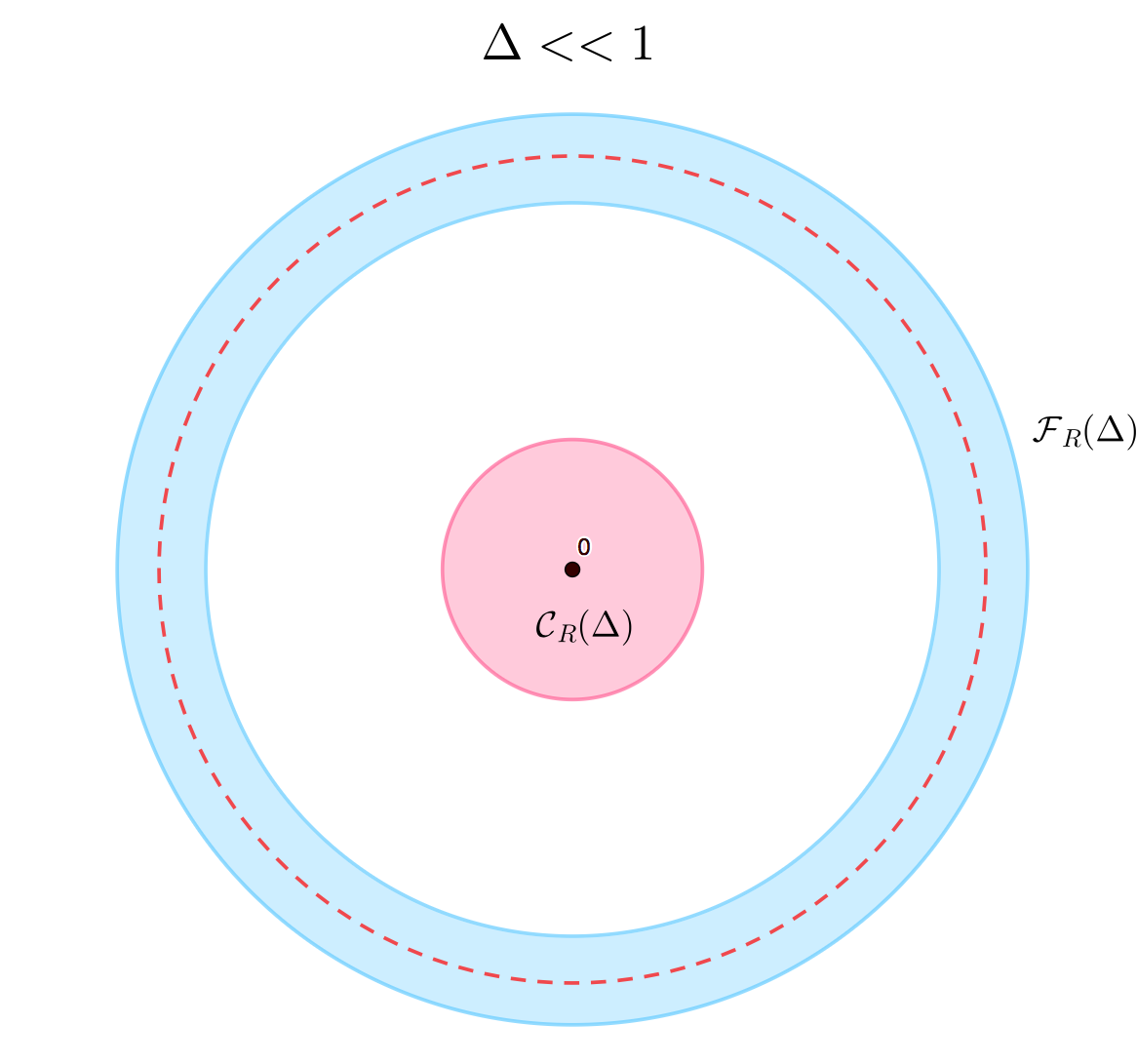}
    \end{minipage}
    \begin{minipage}{0.46\textwidth}
        \includegraphics[width=0.95\textwidth]{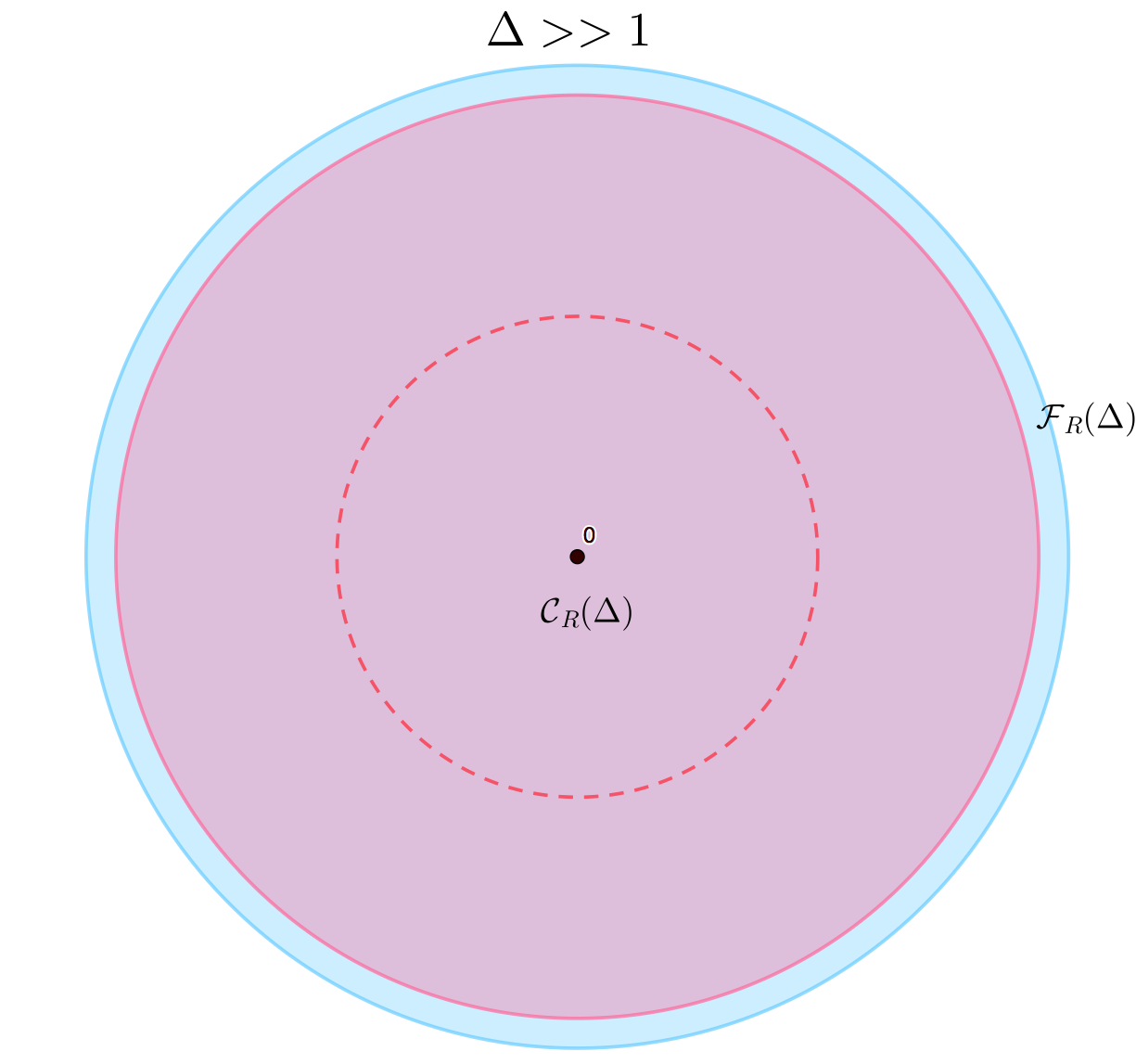}
    \end{minipage}
    
    \caption{A sketch shows the difference in the quantities $\mathcal{Z}_R(\Delta)$ and $\mathcal{C}_R(\Delta)$. The red dotted line represents the normalised tensors. The blue shaded area represents the area counted by $\mathcal{Z}_R(\Delta)$, and the red shaded area represents the area counted by $\mathcal{C}_R(\Delta)$. On the left we take $\Delta \ll 1$, and on the right we take $\Delta \gg 1$.}
    \label{fig:Z_C_sketch}
\end{figure}
It is instructive to compare $\mathcal{Z}_R(\Delta)$ to another quantity, 
\begin{align}
    \mathcal{C}_R(\Delta) &:= \int_{\mathcal{F}_R} {\rm d}\Phi_{w}\, \Theta\left( \Delta - \|\Phi\|^2 \right),\nonumber\\
    &= \frac{G_R}{\Gamma\left[\frac{w\,R}{2}+1\right]} \Delta^{\frac{w\,R}{2}}.
    \label{eq:defofcr}
\end{align}
For the derivation of this quantity we would like to refer to appendix~\ref{sec:app:delta_dep}. This quantity measures the amount of tensor rank decompositions of size smaller than $\Delta$, giving us a measure for the scaling of volume in the space of tensor rank decompositions. Figure~\ref{fig:Z_C_sketch} sketches the difference between $\mathcal{Z}_R(\Delta)$ and $\mathcal{C}_R(\Delta)$. It can be seen that in the $\Delta\rightarrow\infty$ limit, $\mathcal{Z}_R(\Delta) \rightarrow \mathcal{C}_R(\Delta)$.

Dividing $\mathcal{Z}_R(\Delta)$ by this quantity yields a quantity comparing the amount of tensor rank decompositions with a distance 
less than $\sqrt{\Delta}$ from a tensor of size 1, to the amount of decompositions of size less than $\sqrt{\Delta}$: 
\begin{equation}\label{eq:def_ZR_CR}
    \mathcal{Z}_R(\Delta)/\mathcal{C}_R(\Delta) = \begin{cases}
        \frac{w\,R}{N_Q} \Delta^{\frac{N_Q-w\,R}{2}} {}_2F_1\left(1-\frac{w\,R}{2}, \frac{N_Q-w\,R}{2}, \frac{N_Q}{2}+1, \Delta\right), &  \Delta \leq 1,\\
        {}_2F_1\left(-\frac{w\,R}{2}, \frac{N_Q-w\,R}{2}, \frac{N_Q}{2}, \frac{1}{\Delta}\right) , & \Delta \geq 1.
    \end{cases}
\end{equation}
This quantity is useful to predict the difficulty of finding a tensor rank decomposition close to a certain tensor in the tensor space. Notice here that the $G_R$ dependence drops out. This implies that this quantity might be well-defined even in the case that $G_R$ itself is not.%\footnote{This is similar to cases in physics where the path integral is formally not well-defined, but the correlation functions are.} 
\begin{figure}
    \centering
    \includegraphics[width=0.67\textwidth]{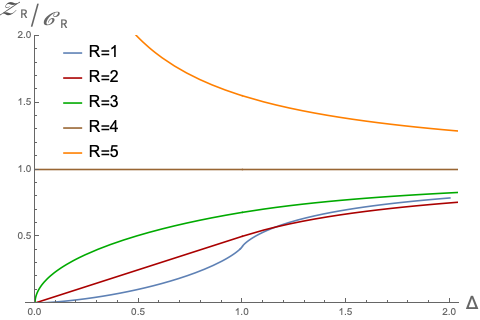}
    \caption{The quantity $\mathcal{Z}_R(\Delta)/\mathcal{C}_R(\Delta)$ for $K=3, N=2, w=1$ and $R$ ranging from 1 to 5. We can identify some of the behaviour expected from~\eqref{eq:def_ZR_CR} and~\eqref{eq:limit_large_delta}. For any value of $R$, the function nears $1$ for $\Delta\rightarrow\infty$. For $w\,R=N_Q$ the function is just one everywhere.}
    \label{fig:Z_C}
\end{figure}

Upon inspecting figure~\ref{fig:Z_C}, it can be seen that~\eqref{eq:def_ZR_CR} has some interesting $R$-dependence. Firstly, 
while the limiting behaviour for $\Delta\rightarrow\infty$ to $1$ is already clear from~\eqref{eq:limit_large_delta} and the overlap in the 
regions as sketched in figure~\ref{fig:Z_C_sketch}, the quantity will limit to $1$ from below for $w\,R<N_Q$, while for $w\,R>N_Q$ it will limit 
towards 1 from above. The reason for this is that for large $R$, even with small $\Delta$ there will be many tensor rank decompositions that 
approximate an arbitrary tensor with error allowance less than $\Delta$, while for small $\Delta$ the volume counted 
by $\mathcal{C}_R(\Delta)$ will be small. This shows that for small $\Delta$, the regions in figure~\ref{fig:Z_C_sketch} scale 
in different ways. Secondly, what is interesting is that the $R=1$ curve overtakes the $R=2$ curve around $\Delta=1$, and for larger $R$ 
the behaviour for small $\Delta$ changes from accelerating to decelerating.

This motivates us to look at a specific case of the quantity~\eqref{eq:def_ZR_CR}, namely for $\Delta=1$. As is clear from the structure of the function, $\Delta = 1$ appears to be a special value which we can analyse further. Fixing $\Delta=1$ gives us the opportunity to look at the $R$ and $w$-dependence a bit closer. Up until now we have kept the value of $w$ arbitrary, it is however interesting to see what happens for specific values of $w$. It turns out that, peculiarly, when taking 
\begin{equation}\label{eq:approximate_Q}
    w \approx \frac{K}{3}\left(N - \frac{11}{12}\right),
\end{equation}
for generic tensors, the function $\mathcal{Z}_R/\mathcal{C}_R(\Delta=1)$, as a function of $R$, appears to be minimised at (or very close to) the expected generic rank of the tensor space.\footnote{The expected rank of a tensor space is the expected rank for which $\mathcal{F}_R$ becomes dense in (an open subset of) the full tensor space. See appendix~\ref{app:sec:TRD}.} This means that until the expected rank, the relative amount of decompositions that approximate tensors is decreasing, while from the expected rank the amount of decompositions that approximate a tensor of unit norm increase. The reason for the form of~\eqref{eq:approximate_Q} is currently unknown, and it would be interesting to find a theoretical explanation for this.

\begin{figure}
    \centering
    \begin{minipage}{0.475\textwidth}
        \centering
        \includegraphics[width=0.95\textwidth]{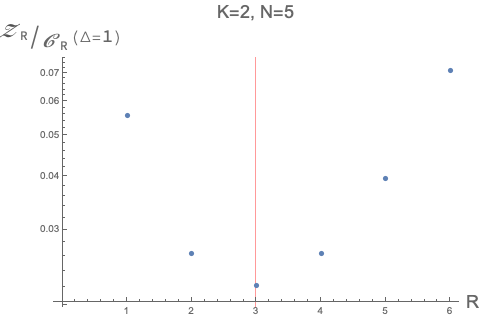}
    \end{minipage}
    \begin{minipage}{0.475\textwidth}
        \centering
        \includegraphics[width=0.95\textwidth]{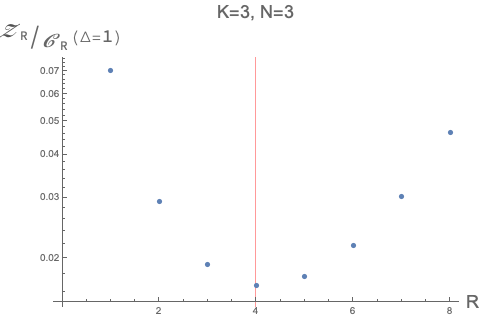}
    \end{minipage}
    \begin{minipage}{0.475\textwidth}
        \centering
        \includegraphics[width=0.95\textwidth]{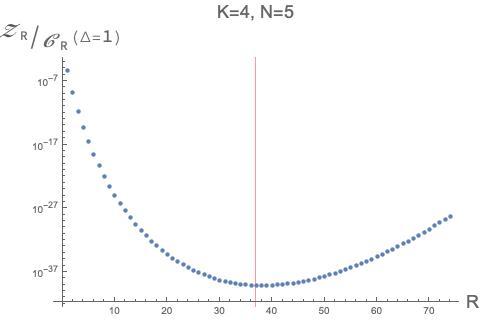}
    \end{minipage}
    \begin{minipage}{0.475\textwidth}
        \centering
        \includegraphics[width=0.95\textwidth]{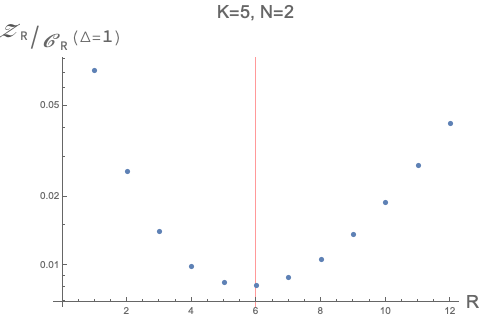}
    \end{minipage}
    \caption{Examples of the minimums when choosing $w$ to be~\eqref{eq:approximate_Q}. The horizontal axis labels $R$, while the 
    vertical axis labels $\mathcal{Z}_R/\mathcal{C}_R(\Delta=1)$. The red line represents the expected rank, see~\eqref{eq:app:expected_rank}, of the tensor space (which is 
    taken to be generic).}
    \label{fig:minimizationExpectedRank}
\end{figure}

\section{Convergence and existence of the volume formula}\label{sec:existence}
The derivation of the closed form of $\mathcal{Z}_R(\Delta)$ depends on the existence of $G_R$, defined in~\eqref{eq:def_G}. 
We will analyse the existence in the current section. 
Except for the case where $R=1$, which is shown below, we will focus on numerical results since a rigid analytic understanding is not present at this point.

First, let us briefly focus on the case of general $N, K$ and $w$, but specifically for $R=1$. This case is the only known case for general $N, K$ and $w$ that can be solved exactly. In this case the quantity simplifies to
\begin{equation*}
    G_1(\epsilon) = \int_{-\infty}^{\infty} |\lambda|^{w-1} {\rm d}\lambda \int_{S_+^{N-1}} {\rm d}\phi\, \e^{-(1+\epsilon)\lambda^2} = \frac{\Gamma\left[\frac{w}{2}\right]}{(1+\epsilon)^{w/2}}\frac{\pi^{\frac{N}{2}}}{\Gamma\left[\frac{N}{2}\right]}.
\end{equation*}
Clearly, in this case the $\lim_{\epsilon\rightarrow0^+}G_1(\epsilon)$ exists, so there exist at least one $R$ for which the quantity exists. The main question is now up to what value of $R$, $R_c$, the quantity exists.

Contrary to the $R=1$ case above, one might expect~\eqref{eq:def_G} does not always converge. The matrix model analysed in~\cite{Lionni:2019rty, Sasakura:2019hql, Obster:2020vfo}, corresponding to a choice of parameters of $K=3$ and $w=\frac{N}{K}$, did not converge in general. It had a critical value around $R_c\sim \frac{1}{2}(N+1)(N+2)$, above which the $\epsilon\rightarrow0^+$ limit did not appear to converge anymore. In the current section we will add numerical analysis for general $K$ and $w=1$, and discuss the apparent leading order behaviour. The main result of this section is that, for $w=1$, the critical value seems to be $R_c=N_Q$. Hereafter in this section we will always assume $w=1$.

The numerical analysis was done by first integrating out the $\lambda_i$ variables, and subsequently using Monte Carlo sampling on the compact manifold that remains. The derivation below is for the symmetric case, but the generic case can be done in a similar manner. The $\lambda_i$ can be integrated out in a relatively straightforward way since the measure in the $w=1$ case is very simple. Let us rewrite~\eqref{eq:def_G} in a somewhat more suggestive form
\begin{align}
    G_R(\epsilon) &:= \int_{\mathcal{F}_R} {\rm d}\Phi_{w} \, \e^{-\Phi^2 - \epsilon \sum_{i=1}^R \lambda_i^2},\nonumber\\
    &= \int_{\mathcal{F}_R} \prod_{i=1}^R {\rm d}\lambda_i\,{\rm d}\phi_i \, \e^{- \sum_{i,j=1}^R \lambda_i \left(\left(\phi^i \cdot \phi^j
    \right)^K+ \epsilon \delta^{ij}\right)\lambda_j }.
\end{align}
It can now be seen that, for $\lambda_i$, this is a simple Gaussian matrix integral over the real numbers $\lambda_i$, with the matrix $M^{ij}_{\epsilon} := \left(\phi^i\cdot\phi^j\right)^K + \epsilon \delta^{ij}$. The result of this integral is
\begin{equation*}
    G_R(\epsilon) = \left(\pi\right)^{R/2}\int_{{S_+^{N-1}}^{\times R}} \prod_{i=1}^R {\rm d}\phi^i\, \frac{1}{\sqrt{\det\left[\left(\phi^i\cdot\phi^j\right)^K + \epsilon \delta^{ij}\right]}},
\end{equation*}
which is a compact, finite (for $\epsilon>0$) integral. The corresponding expression for generic tensors is
\begin{equation*}
    G_R'(\epsilon) = \left(\pi\right)^{R/2}\int_{{S_+^{N-1}}^{\times KR}}\prod_{i=1}^R \prod_{k=1}^K {\rm d}{\phi^{(k)}}^{i} \, \frac{1}{\sqrt{\det \left[ \prod_{k=1}^K{\phi^{(k)}}^i\cdot{\phi^{(k)}}^j + \epsilon\delta^{ij} \right]}}.
\end{equation*}

We wrote a C++ program evaluating the integrals above using Monte Carlo sampling. The general method applied is the following:
\begin{enumerate}
    \item Construct $R$, $N$-dimensional random normalised vectors using Gaussian sampling.
    \item Generate the matrix $M^{ij}$ by taking inner products (and adding $\epsilon$ to the diagonal elements).
    \item Calculate the determinant of $M^{ij}$ and evaluate the integrand.
    \item Repeat this process $M$ times.
\end{enumerate}
The main difference between the above method, and the method for generic tensors, is that we generate $R\cdot K$ random vectors and the matrix is now given by $M^{ij}_\epsilon:=\prod_{k=1}^K {\phi^{(k)}}^i\cdot{\phi^{(k)}}^j + \epsilon\delta^{ij} $. To generate random numbers we used C++'s Mersenne Twister implementation mt19937, and for the calculation of the determinant of $M^{ij}_\epsilon$ we used the C++ Eigen package~\cite{eigenweb}.
\begin{figure}
    \centering
    \begin{subfigure}{0.475\textwidth}
        \centering
        \includegraphics[width=0.975\textwidth]{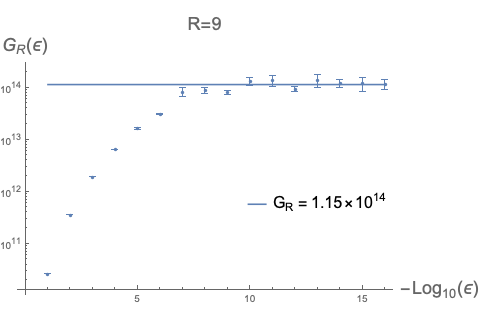}
        %\caption{$R=9$}
    \end{subfigure}
    \begin{subfigure}{0.475\textwidth}
        \centering
        \includegraphics[width=0.975\textwidth]{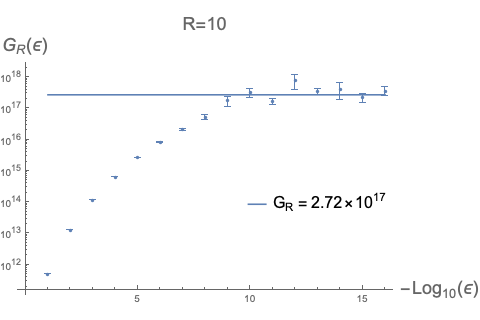}
        %\caption{$R=10$}
    \end{subfigure}
    \begin{subfigure}{0.475\textwidth}
        \centering
        \includegraphics[width=0.975\textwidth]{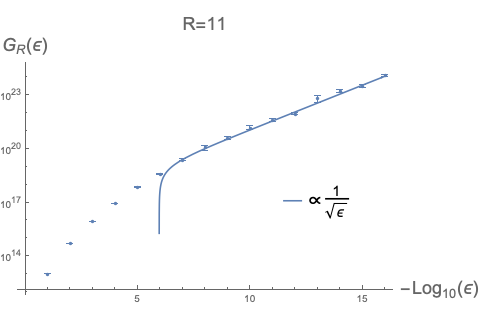}
        %\caption{$R=11$}
    \end{subfigure}
    \begin{subfigure}{0.475\textwidth}
        \centering
        \includegraphics[width=0.975\textwidth]{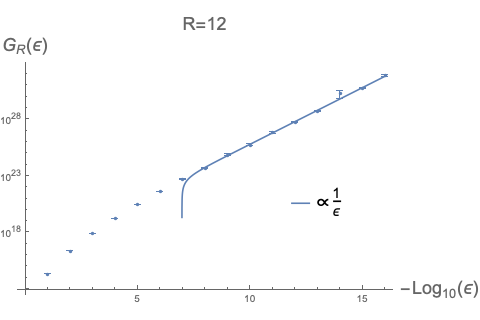}
        %\caption{$R=12$}
    \end{subfigure}
    \caption{An example of the verification of $R_c$ and the determination of the numerical value of $G_R$. This is the case for symmetric tensors, with $K=3$ and $N=3$. The dots (with error-bars) represent the measurements, and the fitted curves are $C*\epsilon^{-\frac{R-R_c}{2}}+const.$ for $R>R_c$ as in~\eqref{eq:G_R_epsilon_expectation}, and the constant value $G_R$ for $R\leq R_c$ as in~\eqref{eq:def_G}. This clearly shows that in this case $R_c=10$.}
    \label{fig:evaluate_GR}
\end{figure}

We have done simulations using this method for both symmetric and generic tensors. After the initial results it became clear that the critical value for $R$ seems to lie on $R_c = N_Q$, so to verify this we calculated the integral for $R_c - 1$, $R_c$ and $R_c+1$, and checked if 
$G_R$ indeed starts to diverge at $R_c+1$. 

What divergent behaviour to expect can be explained as follows. Let us take the limit of $\lim_{\epsilon\rightarrow0^+} M^{ij}_\epsilon =: M^{ij}$. It is clear that this integral diverges whenever the matrix is degenerate. Assume now that $M^{ij}$ has rank $r$, meaning that the matrix $M^{ij}$ in diagonalised form has $R-r$ zero-entries. Thus, adding a small but positive $\epsilon$ to the diagonal entries results in the following expansion
\begin{equation*}
    \det M_\epsilon = A\,\epsilon^{R-r} + \mathcal{O}(\epsilon^{R-r-1}),
\end{equation*}
leading to leading order for the integrand
\begin{equation*}
    \frac{1}{\sqrt{\det M_\epsilon}} \sim \epsilon^{-\frac{R-r}{2}}.
\end{equation*}
Thus, if there is a set with measure nonzero in the integration region with $r<R$, the final $\epsilon$-dependence for small epsilon is expected to be
\begin{equation}
    G_R(\epsilon) \approx C\,\epsilon^{-\frac{R-R_c}{2}} + \mathcal{O}(\epsilon^{-\frac{R-R_c-1}{2}})\label{eq:G_R_epsilon_expectation}
\end{equation}
where the constant factor $C$ is the measure of the divergent set, and the other factor is due to non-leading order non-zero measure integration regions. Note that now we should take $r=R_c$, as by definition of $R_c$ this will yield the leading order contribution for the integral. An example of this approach for finding $R_c$ for symmetric tensors with $N=3$ and $K=3$ is given in figure~\ref{fig:evaluate_GR}. By the definition of $R_c$, for $R\leq R_c$, $G_R(\epsilon)$ should converge to a constant value.

This procedure has been done for both symmetric and generic tensors, and for various choices of the parameters $K$ and $N$. The results of this can be found in table~\ref{table:Rc_results}. This procedure lets us also determine the value of $G_R$ numerically, as is also shown in the examples of figure~\ref{fig:evaluate_GR}.
\begin{table}
    \begin{subtable}{0.5\textwidth}
        \centering
            \begin{tabular}{|c|c|c|c|}
            \hline
            \multicolumn{4}{|c|}{\textbf{Symmetric tensors}}\\
            \hline
                \ $K$\   & \ $N$\  & \ $R_c$\ & $N_Q$\\
                \hline
                \multirow{4}{1.1em}{\ 2} & 2 & 1 & 3 \\
                 & 3 & 6 & 6 \\
                 & 4 & 10 & 10 \\
                 & 5 & 15 & 15 \\
                \hline
                \multirow{4}{1.5em}{\ 3} & 2 & 1 & 4 \\
                 & 3 & 10 & 10 \\
                 & 4 & 20 & 20 \\
                 & 5 & 35 & 35 \\
                \hline
                \multirow{3}{1.5em}{\ 4} & 2 & 1 & 5 \\
                 & 3 & 15 & 15 \\
                 & 4 & 35 & 35 \\
                \hline
            \end{tabular}
    \end{subtable}
    \begin{subtable}{0.5\textwidth}
        \centering
            \begin{tabular}{|c|c|c|c|}
            \hline
            \multicolumn{4}{|c|}{\textbf{Generic tensors}}\\
            \hline
                \ $K$\   & \ $N$\  & \ $R_c$\ & $N_Q$\\
                \hline
                \multirow{4}{1.1em}{\ 2} & 2 & 4 & 4 \\
                 & 3 & 9 & 9 \\
                 & 4 & 16 & 16 \\
                 & 5 & 25 & 25 \\
                \hline
                \multirow{3}{1.5em}{\ 3} & 2 & 8 & 8 \\
                 & 3 & 27 & 27 \\
                 & 4 & 64 & 64 \\
                \hline
                \multirow{2}{1.5em}{\ 4} & 2 & 16 & 16 \\
                 & 3 & 81 & 81 \\
                \hline
            \end{tabular}
    \end{subtable}

    \caption{The results of the verification of $R_c$ for both symmetric tensors and generic tensors. It can be seen that in most cases, except $N=2$ for symmetric tensors, the hypothesis $R_c=N_Q$ holds.}
    \label{table:Rc_results}
\end{table}

Generally, the result was quite clear: There is a transition point at $R_c=N_Q$. This is true for all examples we tried, except for the $N=2$ cases for symmetric tensors, in which cases the critical value is $R_c=1$.

Let us explain why an upper bound for the value of $R_c$ is given by $N_Q$. The matrix may be written as
\begin{equation*}
    M^{ij} = \sum_{a_1,\ldots,a_K=1}^N (\phi_{a_1}^i \ldots \phi_{a_K}^i)\cdot(\phi_{a_1}^j \ldots \phi_{a_K}^j).
\end{equation*}
Thus, if we consider only the right part of the expression above (i.e. one of the rows of the matrix), it can be seen as the linear map
\begin{align*}
    \Lambda : \mathbb{R}^{R} &\rightarrow \mathbb{R}^{N_Q},\\
    \lambda_i &\rightarrow \sum_{i=1}^R \lambda_i \phi^i_{a_1}\ldots \phi^i_{a_K}.
\end{align*}
A basic result from linear algebra is that a linear map from a vectorspace $V$ to $W$, with $\dim(V) \geq \dim(W)$, has a kernel of at least dimension
\begin{equation*}
    \dim(\ker \Lambda) \geq \dim(V) - \dim(W).
\end{equation*}
Thus, for $R>N_Q$ this kernel always has a finite dimension, and since $M^{ij}$ is simply the square of this linear transformation, $\det M = 0$. Thus we may conclude
\begin{equation*}
    R_c \leq N_Q.
\end{equation*}
The reason why the critical rank actually attains this maximal value for all cases $N>2$ is at present not clear. However, it is good to note that for random matrices the set of singular matrices has measure zero, hence for $R\leq R_c$ the construction of the matrix $M^{ij}$ appears to be random.

The current result of $R_c=N_Q$, together with the previous result for $w=\frac{N}{K}$ and $K=3$ of $R_c \approx \frac{3}{N} N_Q$ mentioned before, suggest a general formula that holds for most cases
\begin{equation}\label{eq:conjecture:RC}
    R_c = \frac{N_Q}{w}.
\end{equation}
This formula seems very simple, but there is no analytic understanding for this formula yet. At present it should be treated merely as a conjecture.

\section{Numerical evaluation and comparison}\label{sec:numerics}
The main goal of this section is to numerically confirm the derived formula for $\mathcal{Z}_R(\Delta)$ in~\eqref{eq:result_Z_symmetric}. Therefore we will mainly focus on values of $R \leq R_c$ found in section~\ref{sec:existence} that allow for the existence of $G_R$ defined in~\eqref{eq:def_G}, since in those cases the derivation is expected to hold. We will briefly comment on cases where $R > R_c$ at the end of the section. In short; we will find that the relation found in~\eqref{eq:result_Z_symmetric} indeed holds for all cases that could reliably be calculated. In this section we will always take $w=1$, such that the integration measure on $\mathcal{F}_R$ is given by
\begin{equation*}
    {\rm d}\Phi := \prod_{i=1}^R {\rm d}\lambda_i {\rm d}\phi^i.
\end{equation*}

\begin{figure}
    \centering
    \begin{subfigure}{0.475\textwidth}
        \centering
        \includegraphics[width=0.95\textwidth]{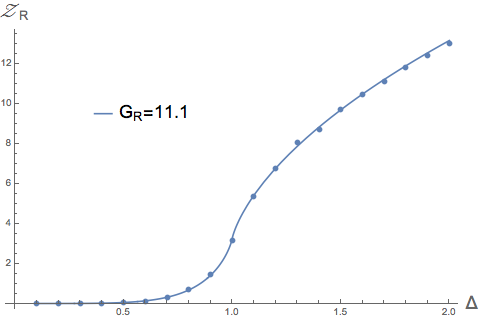}
        \caption{Symmetric tensors $N=3$, $R=1$}
    \end{subfigure}
    \begin{subfigure}{0.475\textwidth}
        \centering
        \includegraphics[width=0.95\textwidth]{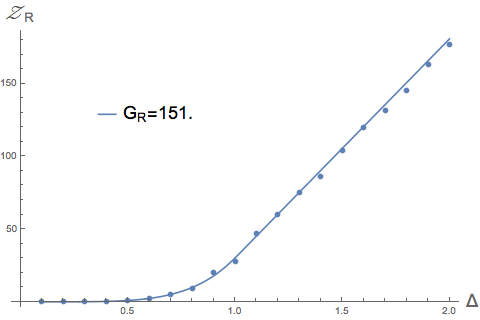}
        \caption{Symmetric tensors $N=3$, $R=2$}
    \end{subfigure}
    \begin{subfigure}{0.475\textwidth}
        \centering
        \includegraphics[width=0.95\textwidth]{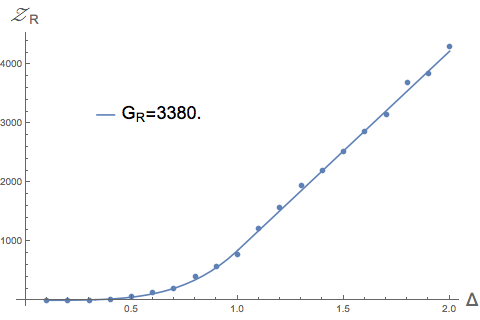}
        \caption{Generic tensors $N=2$, $R=2$}
    \end{subfigure}
    \begin{subfigure}{0.475\textwidth}
        \centering
        \includegraphics[width=0.95\textwidth]{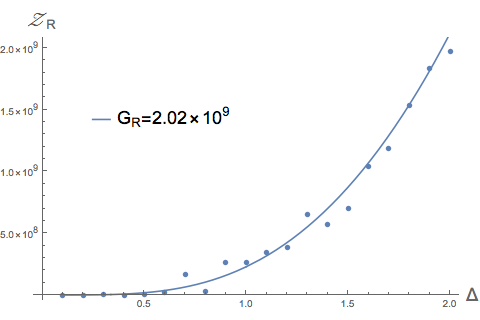}
        \caption{Generic tensors $N=2$, $R=5$}
    \end{subfigure}
    \caption{Several examples of the direct numerical evaluation of $\mathcal{Z}_R(\Delta)$ for $K=3$ and $w=1$, as a function of $\Delta$. The dots illustrate the numerically evaluated values, while the line is the curve in~\eqref{eq:result_Z_symmetric} with the value of $G_R$ determined numerically as explained in section~\ref{sec:existence}.}
    \label{fig:Z_theta_symmetric}
\end{figure}

Since the integration region has a rapidly increasing dimension, we used Monte Carlo sampling to evaluate the integral. To do this, we alter the configuration space to a compact manifold by introducing a cutoff $\Lambda$
\begin{equation*}
    \mathbb{R}^R\times {S_+^{N-1}}^{\times R} \rightarrow \left[-\Lambda, \Lambda\right]^R \times {S_+^{N-1}}^{\times R},
\end{equation*}
and similarly for the generic tensor case: 
\begin{equation*}
    \left[-\Lambda, \Lambda\right]^R \times {S_+^{N-1}}^{\times KR}.
\end{equation*}

With the integration region now being compact, there is no need for the extra regularisation parameter $\epsilon$ anymore, and we can let $\Lambda$ play that role instead. 

In order to look at a more complicated example than matrices, but still keep the discussion and calculations manageable, we will only consider tensors of degree 3 (i.e. $K=3$). Since the difficulty of the direct evaluation of $\mathcal{Z}_R(\Delta)$ rapidly increases due to the high dimension of the integration region, we will only focus on low values of $N$. To illustrate: noting that we also have to integrate over the normalised tensorspace, the integration region for generic tensors with $N=3$ for $R=2$ is already  40-dimensional. Considering the derivation in section~\ref{sec:derivation} and the evidence for the existence of $G_R$ presented in section~\ref{sec:existence}, we will only show results for low values of $N$, as sufficient evidence for~\eqref{eq:result_Z_symmetric} is already at hand.

In the symmetric case the $N=2$ case is only well-defined for $R=1$, since $R_c=1$ as can be found in table~\ref{table:Rc_results}. This means that only evaluating $N=2$ would yield only limited insight, hence we also evaluated cases for $N=3$. We evaluated all cases up to $R_c=10$, and found that results always agree with~\eqref{eq:result_Z_symmetric} up to numerical errors. Two examples may be found in figure~\ref{fig:Z_theta_symmetric}. For the generic case the situation is slightly different. For $N=2$ the critical value $R_c=8$, so we can actually expect interesting behaviour in this case already. Hence we solely focus on the $N=2$ case and evaluate the integral up to $R_c=8$. Two examples of this may be found in figure~\ref{fig:Z_theta_symmetric}.

We may conclude that for both the symmetric and generic cases, the numerical results agree perfectly well with the derived equation~\eqref{eq:result_Z_symmetric}, and moreover match the values of $G_R$ determined independently in the numerical manner
explained in section~\ref{sec:existence}. 

\begin{figure}
    \centering
    \begin{minipage}{0.475\textwidth}
        \centering
        \includegraphics[width=0.95\textwidth]{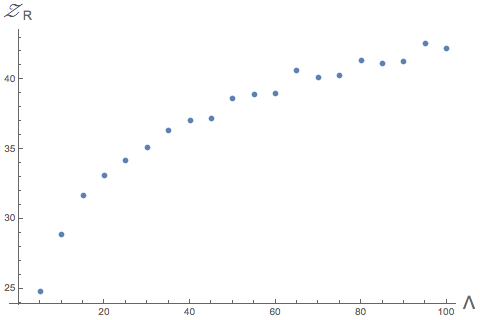}
    \end{minipage}
    \begin{minipage}{0.475\textwidth}
        \centering
        \includegraphics[width=0.95\textwidth]{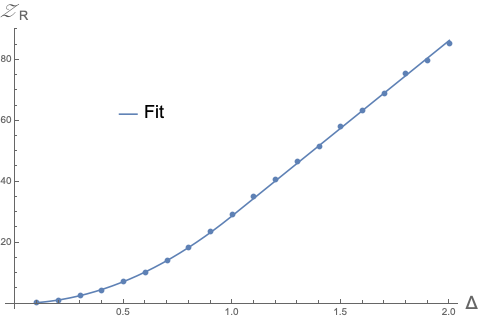}
    \end{minipage}
    \caption{Numerical evaluation of $\mathcal{Z}_{R=2}(\Delta)$ for $N=2$. On the left, we set $\Delta=1$ and vary $\Lambda$ on the horizontal axis. It can be seen that the value indeed diverges linearly, as is expected from the discussion in section~\ref{sec:existence}, since this corresponds to a divergence of $G_R(\epsilon)\propto \epsilon^{-1/2}$ because of $\epsilon \sim \Lambda^{-2}$. On the right, we set $\Lambda=10$ and vary $\Delta$ on the horizontal axis, to show that the functional form (except for the divergent part) is still given by the formula~\eqref{eq:result_Z_symmetric}.}
    \label{fig:Z_theta_generic}
\end{figure}

We finalise this section with a remark on the case of $R>R_c$. In this case $G_R$ diverges and the correctness of formula~\eqref{eq:result_Z_symmetric} is not guaranteed anymore. This leads to a question: Does $\mathcal{Z}_R(\Delta)$ also diverge for $R>R_c$, or is the divergence of $G_R$ only problematic for the derivation of its closed form? We investigated the simplest case for this: symmetric tensors with dimension $N=2$ and rank $R=2$. We found that the $\mathcal{Z}_R(\Delta)$ still diverges by setting $\Delta=1$ and investigating the dependence on $\Lambda$, which can be seen in figure~\ref{fig:Z_theta_generic}. One peculiar fact we discovered is that the functional form of $\mathcal{Z}_R^{\Lambda}(\Delta)$ for fixed and finite $\Lambda$ still follows the functional dependence on $\Delta$ of~\eqref{eq:result_Z_symmetric}, also shown in figure~\ref{fig:Z_theta_generic}. 

This last fact suggests the possibility that the quantity defined in~\eqref{eq:def_ZR_CR} might actually be finite even for $R>R_c$, since the diverging parts will cancel out when taking the $\epsilon\rightarrow0^+$ limit (or $\Lambda\rightarrow\infty$ as in this section). To support this a bit further, let us consider the differential equation solved by the hypergeometric function~\eqref{eq:app:hypergeom_eq}, which is a
homogeneous ordinary differential equation. If we rewrite our result from~\eqref{eq:result_Z_symmetric}\footnote{Here we took the case where $z<1$, the exact same argument holds for the $z>1$ case.}
\begin{equation*}
    {}_2F_1(a,b,c;z) := u(z) \propto z^{-A} \mathcal{Z}_R(z),
\end{equation*}
and plug this into the hypergeometric differential equation, we notice that the resulting equation, which is the equation that $\mathcal{Z}_R(z)$ solves, necessarily still is a homogeneous ordinary differential equation. If we assume that the actual physically relevant properties are described by this differential equation, an overall factor should not matter. Hence, if we extract this overall factor (which might become infinite in the limit $\epsilon\rightarrow0^+$) we should be left with the physically relevant behaviour.

\section{Conclusions and discussions}\label{sec:conclusion}
Motivated by recent progress in the study of the Canonical Tensor Model, we turned our attention in this work to the space of tensor rank decompositions. Because of the analogy between the terms of a tensor rank decomposition and points in a discrete space discussed in~\cite{Kawano:2018pip} we call this the configuration space of tensor rank decompositions. This space has the topology of a product of $R$ times the real line and $R$ times an $N-1$-dimensional unit hemisphere. We equip this space with a measure generated by an infinitesimal volume element, depending on the parameter $w$. In the definition we are rather general, taking into account both symmetric and non-symmetric tensors.

The central result of this work is the derivation of a closed formula for the average volume around a tensor of unit norm, $\mathcal{Z}_R(\Delta)$, in terms of a hypergeometric function in~\eqref{eq:result_Z_symmetric}. This formula depends on the degrees of freedom of the tensor space, the parameter $w$ of the measure, and the rank of the tensor rank decompositions we are considering. The existence of such a closed form formula is far from obvious, and the derivation crucially depends on the existence of a quantity $G_R$. We have investigated the existence of this quantity numerically for the case where $w=1$. In this case the maximum value of $R$ 
for the existence appears to agree with the degrees of freedom of the tensor space $R_c = N_Q$, with the exception of the case for symmetric tensors where $N=2$. Together with earlier results in~\cite{Lionni:2019rty, Sasakura:2019hql, Obster:2020vfo} we conjecture a more general formula~\eqref{eq:conjecture:RC}. Finally we conducted some direct numerical checks for $\mathcal{Z}_R(\Delta)$
and found general agreement with the derived formula.

From a general point of view, we have several interesting future research directions. For one, the conjectured formula~\eqref{eq:conjecture:RC} 
for the maximum $R_c$ is based on the analysis of two values of $w$. It might be worth extending this analysis to more values, which might lead to a more proper analytical explanation for this formula that is currently missing. Secondly, we introduced a quantity $\mathcal{C}_R(\Delta)$, describing the amount of decompositions of size less than $\Delta$. Dividing $\mathcal{Z}_R(\Delta)$ by $\mathcal{C}_R(\Delta)$, we expect that this leads to a meaningful quantity that is finite, even for $R>R_c$. Understanding this quantity and its convergence (or divergence) better would be worth investigating. Finally, a peculiar connection between $w$ and the expected rank was found for some examples, where tuning $w$ as in~\eqref{eq:approximate_Q} lead to $\mathcal{Z}_R(\Delta=1)$ to be minimised for the expected rank of the tensor space. Whether this is just coincidence, or has some deeper meaning, would be interesting to take a closer look at.

Let us briefly discuss what the results mean for the Canonical Tensor Model. The present work provides a first insight into the question how many tensor rank decompositions are close to a given tensor $Q_{abc}$. This might lead to a better understanding into how many ``discrete universes'' of a given size (i.e. amount of points $R$) are close to a tensor. Some work in this area still remains to be done, as we can only give an estimate since we take the average over tensors of size one. 

To conclude, we would like to point out that the formula~\eqref{eq:result_Z_symmetric} could prove to be important in the understanding of the wave function of the Canonical Tensor Model studied in~\cite{Obster:2017pdq, Obster:2017dhx, Lionni:2019rty, Sasakura:2019hql, Obster:2020vfo, Sasakura:2021lub}. In~\cite{Sasakura:2021lub}, the phase of the wave function was analysed in the $Q$-representation, however the amplitude of the wave function is not known. From~\cite{Obster:2017pdq, Obster:2017dhx} we expect that there is a peak structure, where the peaks are located at $Q_{abc}$ that are symmetric under Lie group symmetries. In the present paper we have determined an exact formula for the mean amplitude, which we can use to compare to the local wave function values.

\vspace{.3cm}
\section*{Acknowledgements}
%%%%%%%%%%%%%%%%%%%%%%%%%%%%%%%%%%%%%%%%%%%%%%%
%\centerline{\bf Acknowledgements} 
The work of N.S. is supported in part by JSPS KAKENHI Grant No.19K03825. 
%%%%%%%%%%%%%%%%%%%%%%%%%%%%%%%%%%%%%%%%%%%%%%%%

\appendix
\section{Tensor Rank Decompositions}\label{app:sec:TRD}
The tensor rank decomposition, also called the canonical polyadic decomposition, may be thought of as a generalisation of the singular value decomposition (SVD) for matrices, which are tensors of degree two, to tensors of general degree. For a more extensive introduction to tensors and the tensor rank decomposition, we would like to refer to~\cite{hackbusch2019tensor, landsberg2011tensors}. 

The SVD decomposes a given real $N\times N$ matrix $M$ into $M = A^T \Lambda B$, where $A$ and $B$ are orthogonal matrices and $\Lambda$ is a diagonal matrix, the diagonal components of which are called the singular values.\footnote{To keep the discussion simple, only real $N\times N$ matrices are considered here, but this may be generalised in a straightforward manner.} The amount of non-zero singular values of a given matrix is called the rank of the matrix, denoted by $R$. To extend the SVD to tensors of general degree, let us rewrite this in a more suggestive form which is called the dyadic notation of the matrix
\begin{equation*}
    M_{ab} = \sum_{i=1}^{R} \sum_{j=1}^{R} (A_{i})_a\ \Lambda_{ii} \delta_{ij}\ (B_{j})_b := \sum_{i=1}^R \lambda_i v_a^i w_b^i, 
\end{equation*}
where $v^i, w^i \in  \mathbb{R}^N$ and $\lambda_i \equiv \Lambda_{ii}\in \mathbb{R}$ are the nonzero singular values. The generalisation to general tensors of degree K is now straightforward:
\begin{equation}
    Q_{a_1\ldots a_K} = \sum_{i=1}^R \lambda_i {v_{a_1}^{(1)}}^i \ldots {v_{a_K}^{(K)}}^i,
\end{equation}
where the rank $R$ is now defined as the lowest number for which such a decomposition exists, and ${v^{(k)}}^i\in\mathbb{R}^N$. For symmetric tensors (similar to symmetric matrices) we can find a decomposition in terms of symmetric rank-1 tensors, meaning that every term in the decomposition is generated by a single vector
\begin{equation*}
    Q_{a_1\ldots a_K} = \sum_{i=1}^R \lambda_i {v_{a_1}}^i \ldots {v_{a_K}}^i.
\end{equation*}
The minimum $R$ for which this is possible is called the symmetric rank.

The space of tensor rank decompositions with $R$ components, $\mathcal{F}_R$, is a subset of the full tensor space
\begin{equation*}
    \mathcal{F}_R \subset \mathcal{T} = V\otimes \ldots \otimes V.
\end{equation*}
This space increases as $R$ becomes bigger, and in its limit it spans the whole tensor space. A typical rank $R_t$ of the tensor space $\mathcal{T}$ is a rank for which $\mathcal{F}_R$ has positive measure in the full tensor space. This typical rank is not necessarily unique, but if this is the case it is called the generic rank. 

The expected generic rank, $R_E$, is a conjectured formula for the generic rank that a tensor space is expected to have, which has been proven to provide a lower estimate of the generic rank. The formula for the non-symmetric case is given by:
\begin{equation}
    R_E = \left \lceil \frac{N^K}{N*K-K+1} \right \rceil.\label{eq:app:expected_rank}
\end{equation}

Note that while the tensor rank decomposition generalises the singular value decomposition, there are many differences between the two~\cite{pierreTRD}. For example, often the tensor rank decomposition is unique~\cite{koldaTRD}, but actually computing the tensor rank decomposition is very hard~\cite{Hillar_NPhard}. 

Note that the vectors ${v^{(k)}}^i$ may be re-scaled as 
\begin{align*}
    {\phi^{(k)}}^i &:= \pm \frac{{v^{(k)}}^i}{\left\|{v^{(k)}}^i\right\|},\\
    \lambda_i &\rightarrow \lambda_i \prod_{k=1}^K\left(\pm \left\|{v^{(k)}}^i\right\|\right),
\end{align*}
where the sign is taken such that ${\phi^{(k)}}^i$ lies on the upper hemisphere $S^{N-1}_+ \subset \mathbb{R}^N$. This is the form we will use in order to remove redundancies in the definition.

\section{Lemmas}\label{app:sec:results}
This appendix section contains two lemmas used in the propositions of section~\ref{sec:derivation}. 

\begin{lemma}\label{lemma:Y_alpha_gamma_finite}
    Given that $G_R$ in~\eqref{eq:def_G} is finite, for $\alpha, \gamma >0 $ the following limit of the integral
    \begin{equation}
        Y(\alpha, \gamma) := \lim_{\epsilon\rightarrow0^+}\int_{\mathbb{R}^{N_Q}} {\rm d}Q\, \int_{\mathcal{F}_R} {\rm d}\Phi \ \e^{- \alpha Q^2 - \gamma (Q - \Phi)^2 -\epsilon\sum_{i=1}^R \lambda_i^2},\label{eq:ZRalphagamma}
    \end{equation}
    is finite.
\end{lemma}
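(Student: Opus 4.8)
The plan is to decouple the integration over $Q$ from that over $\Phi$ by completing the square in $Q$, perform the resulting Gaussian integral over the whole of $\mathbb{R}^{N_Q}$, and reduce what remains to a rescaled copy of $G_R(\epsilon)$. Since the integrand $\e^{-\alpha Q^2-\gamma(Q-\Phi)^2-\epsilon\sum_i\lambda_i^2}$ is manifestly nonnegative, every interchange of integration order is legitimate by Tonelli's theorem, so I would work at fixed $\epsilon>0$ throughout and defer the $\epsilon\to0^+$ limit to the very end.

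First I would fix $\epsilon>0$ and rewrite the $Q$-dependent part of the exponent using $\alpha Q^2+\gamma(Q-\Phi)^2=(\alpha+\gamma)\bigl(Q-\tfrac{\gamma}{\alpha+\gamma}\Phi\bigr)^2+\tfrac{\alpha\gamma}{\alpha+\gamma}\Phi^2$. Integrating over $Q\in\mathbb{R}^{N_Q}$ (the shift by $\tfrac{\gamma}{\alpha+\gamma}\Phi$ leaves the domain unchanged) produces the finite factor $\bigl(\tfrac{\pi}{\alpha+\gamma}\bigr)^{N_Q/2}$ and leaves the integral $\int_{\mathcal{F}_R}{\rm d}\Phi\,\e^{-\beta\Phi^2-\epsilon\sum_i\lambda_i^2}$ with $\beta:=\tfrac{\alpha\gamma}{\alpha+\gamma}>0$. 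This residual integral is finite for any $\epsilon>0$: writing $\Phi^2=\sum_{i,j}\lambda_i(\phi^i\cdot\phi^j)^K\lambda_j$, the quadratic form in the $\lambda_i$ has matrix $\beta\,(\phi^i\cdot\phi^j)^K+\epsilon\,\delta^{ij}$, which is positive definite (a Gram matrix plus $\epsilon I$), while the $\phi^i$ range over a compact domain.

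Next I would relate this residual integral to $G_R$. Because $\Phi^2$ is homogeneous of degree two in the $\lambda_i$ and the measure carries $\prod_i|\lambda_i|^{w-1}\,{\rm d}\lambda_i$, the substitution $\lambda_i\mapsto\lambda_i/\sqrt{\beta}$ converts $\beta\Phi^2$ into $\Phi^2$, rescales $\epsilon\sum_i\lambda_i^2$ into $(\epsilon/\beta)\sum_i\lambda_i^2$, and contributes a Jacobian-and-measure factor $\beta^{-wR/2}$. Hence at fixed $\epsilon>0$ the full expression equals $\bigl(\tfrac{\pi}{\alpha+\gamma}\bigr)^{N_Q/2}\beta^{-wR/2}G_R(\epsilon/\beta)$, which precisely matches the computation carried out in Step two of the proof of Proposition~\ref{prop:barZ_result}.

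Finally I would take $\epsilon\to0^+$. As the integrand increases pointwise as $\epsilon$ decreases, the monotone convergence theorem applies; and since $G_R(\cdot)$ is a monotonically decreasing positive function with $\lim_{\epsilon\to0^+}G_R(\epsilon)=G_R$ (and $\epsilon/\beta\to0^+$ as $\epsilon\to0^+$), I obtain $Y(\alpha,\gamma)=\bigl(\tfrac{\pi}{\alpha+\gamma}\bigr)^{N_Q/2}\beta^{-wR/2}G_R$, which is finite exactly under the hypothesis $G_R<\infty$. The only step requiring genuine care — and thus the main obstacle — is justifying that the $\epsilon\to0^+$ limit may be passed through the $\Phi$-integral; this is where the stated monotonicity of $G_R(\epsilon)$ together with the finiteness assumption does all the work, guaranteeing both that the limit exists and that it is bounded above by $\bigl(\tfrac{\pi}{\alpha+\gamma}\bigr)^{N_Q/2}\beta^{-wR/2}G_R<\infty$.
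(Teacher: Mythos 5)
Your proof is correct, but it takes a genuinely different route from the paper's own proof of this lemma. The paper never computes $Y(\alpha,\gamma)$ here: it only establishes an upper bound, using the reverse triangle inequality together with the algebraic inequality $(x-y)^2 \geq A y^2 - \tfrac{A}{1-A}x^2$ (the same device as in step one of proposition~\ref{prop:barZ_result}) to decouple the $Q$ and $\Phi$ integrations; choosing $A$ small enough that $\alpha > \tfrac{\gamma A}{1-A}$ leaves a finite Gaussian integral over $Q$ times a rescaled copy of $G_R$, and finiteness follows. You instead evaluate the integral exactly at fixed $\epsilon$: complete the square, integrate out $Q$ (justified by Tonelli), rescale $\lambda_i \mapsto \lambda_i/\sqrt{\beta}$ to get $\bigl(\tfrac{\pi}{\alpha+\gamma}\bigr)^{N_Q/2}\beta^{-wR/2}G_R(\epsilon/\beta)$, and then let $\epsilon\to 0^+$. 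This yields a strictly stronger conclusion — the closed-form value of $Y(\alpha,\gamma)$, which coincides with the paper's~\eqref{eq:Y_alpha_gamma_lhs} — whereas the paper's inequality argument buys brevity and reuses machinery already set up, deferring the exact evaluation to step two of proposition~\ref{prop:barZ_result}. In effect your lemma proof absorbs that later computation, and it is self-contained and non-circular, since none of your steps presuppose finiteness of $Y$. One small remark: your closing worry about passing the $\epsilon\to0^+$ limit through the $\Phi$-integral is moot in your own argument — you have already reduced the integral to the closed form $\beta^{-wR/2}G_R(\epsilon/\beta)$ before taking the limit, so all you need is $\lim_{\epsilon\to0^+}G_R(\epsilon/\beta)=G_R$, which is immediate from the definition~\eqref{eq:def_G} and the hypothesis; the appeal to monotone convergence is harmless but unnecessary.
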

\begin{proof}
    Using the same inequality with $0<A<1$,
    \begin{equation*}
       \|Q-\Phi\|^2 \geq \left(\|Q\| - \|\Phi\|\right)^2 \geq A\|\Phi\|^2 - \frac{A}{1-A}\|Q\|^2,
    \end{equation*}
    as in step one of the proof of proposition~\ref{prop:barZ_result}, we obtain
    \begin{align*}
        Y(\alpha, \gamma) &\leq \lim_{\epsilon\rightarrow0^+}\int_{\mathbb{R}^{N_Q}} {\rm d}Q\, \int_{\mathcal{F}_R} {\rm d}\Phi\ \e^{- \alpha Q^2 + \frac{\gamma A}{1-A}Q^2 - \gamma A \Phi^2  -\epsilon \sum_{i=1}^R \lambda_i^2},\\
        &=\int_{\mathbb{R}^{N_Q}} {\rm d}Q\, \e^{-\left(\alpha-\frac{\gamma A}{1-A}\right) Q^2} \lim_{\epsilon\rightarrow0^+} \int_{\mathcal{F}_R} {\rm d}\Phi\, \e^{-\gamma A \Phi^2 - \epsilon\sum_{i=1}^R \lambda_i^2}.
    \end{align*}
    In the second line it can be seen that the $Q$ and $\Phi$ integration decouple, where the $Q$ integration is simply a finite Gaussian integral if one takes $A$ such that $\alpha > \frac{\gamma A}{1-A}$. The $\Phi$ integration is nothing more than a finite constant multiplied by $G_R$.
    
    Hence, we conclude that this integration is finite if $\lim_{\epsilon\rightarrow 0^+}G_R(\epsilon)$ exists.
\end{proof}
\begin{lemma}\label{lemma:Y_integral_limit}
   The limits in equation~\eqref{eq:Y_alpha_gamma_limits} may be safely interchanged, i.e.
   \begin{equation}\label{eq:app_Y_intergral_limit}
       \lim_{\epsilon\rightarrow0^+} \int_{0}^\infty {\rm d}x\, \bar{Z}_R(x;\epsilon\, x) \, x^{\frac{N_Q+w\,R}{2}-1} \, \e^{-t\,x} = \int_{0}^\infty {\rm d}x \lim_{\epsilon\rightarrow0^+} \bar{Z}_R(x; \epsilon\, x) \, x^{\frac{N_Q+w\, R}{2}-1}  \, \e^{-tx},
   \end{equation}
   under the assumption that $\lim_{\epsilon\rightarrow0^+}G_R(\epsilon)$ converges and is finite.%\footnote{Note that in the case of Lebesgue integration, the monotone convergence theorem directly implies that the limit and integration actions commute.}
\end{lemma}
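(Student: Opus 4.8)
The plan is to justify the interchange of limit and integral in~\eqref{eq:app_Y_intergral_limit} by the dominated convergence theorem, and the entire argument hinges on recycling the $\epsilon$-independent upper bound already produced in Step one of Proposition~\ref{prop:barZ_result}. First I would record the two facts that are at our disposal: the pointwise estimate~\eqref{eq:Z_R_bound}, namely $\bar{Z}_R(\gamma;\epsilon)\leq (\gamma A)^{-\frac{w\,R}{2}} \e^{\gamma\frac{A}{1-A}} G_R\!\left(\frac{\epsilon}{\gamma A}\right)$ valid for any $0<A<1$, together with the monotonicity remark following~\eqref{eq:def_G}, which tells us $G_R(\epsilon)$ increases to $G_R$ as $\epsilon\to0^+$ and hence $G_R(\epsilon)\leq G_R$ for every $\epsilon>0$.

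Applying the bound with $\gamma=x$ and regularisation parameter $\epsilon\,x$, exactly as the quantity $\bar{Z}_R(x;\epsilon\,x)$ appears in the integrand, the $x$-dependence inside the argument of $G_R$ cancels:
\begin{equation*}
    \bar{Z}_R(x;\epsilon\,x) \leq (x A)^{-\frac{w\,R}{2}} \e^{x\frac{A}{1-A}} G_R\!\left(\frac{\epsilon}{A}\right) \leq (x A)^{-\frac{w\,R}{2}} \e^{x\frac{A}{1-A}} G_R,
\end{equation*}
which is now completely independent of $\epsilon$. Multiplying by the remaining factors and using $\frac{N_Q+w\,R}{2}-1-\frac{w\,R}{2}=\frac{N_Q}{2}-1$, the integrand on the left-hand side of~\eqref{eq:app_Y_intergral_limit} is dominated, uniformly in $\epsilon$, by
\begin{equation*}
    h(x) := A^{-\frac{w\,R}{2}} G_R\, x^{\frac{N_Q}{2}-1}\, \e^{-\left(t-\frac{A}{1-A}\right)x}.
\end{equation*}

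The next step is to check that $h$ is integrable on $(0,\infty)$. Near $x=0$ the factor $x^{\frac{N_Q}{2}-1}$ is integrable since $N_Q\geq1$, while near $x=\infty$ integrability requires the exponent $t-\frac{A}{1-A}$ to be positive. Because $t>0$ is fixed and $\frac{A}{1-A}\to0$ as $A\to0^+$, I would simply fix $A$ small enough that $\frac{A}{1-A}<t$; this is the single place where a genuine choice must be made, and it is the crux of the argument. With such $A$ we have $h\in L^1(0,\infty)$.

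Finally, I would observe that for each fixed $x>0$ the pointwise limit $\lim_{\epsilon\to0^+}\bar{Z}_R(x;\epsilon\,x)$ exists, since the integrand defining $\bar{Z}_R$ in~\eqref{eq:def_Z_gamma_epsilon} increases monotonically as its regularisation parameter decreases to zero and is bounded above. Having a fixed integrable dominating function $h$ together with this pointwise convergence, the dominated convergence theorem applies and delivers~\eqref{eq:app_Y_intergral_limit}. The main obstacle is not any single estimate but rather the conceptual point that the $x$-dependence cancels inside $G_R\!\left(\frac{\epsilon}{\gamma A}\right)$, which is precisely what upgrades the bound~\eqref{eq:Z_R_bound} to a genuinely $\epsilon$-uniform dominator; once the cutoff $A$ is tuned below $t$, the rest is routine.
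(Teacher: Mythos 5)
Your proposal is correct, but it takes a genuinely different route from the paper. The paper splits the integral at a cutoff $X$: on $[0,X)$ it argues that the convergence $\bar{Z}_R(x;\epsilon x)\to\bar{Z}_R(x)$ is uniform (so limit and integral can be swapped there), and on $[X,\infty)$ it uses monotonicity of $\bar{Z}_R$ in both arguments to bound the tail by $\bar{Z}_R(X)\int_X^\infty x^{\frac{N_Q+wR}{2}-1}\e^{-tx}\,{\rm d}x$, which vanishes as $X\to\infty$. You instead build a single $\epsilon$-uniform dominating function and invoke dominated convergence, and the key mechanism is exactly the one you flag: substituting $\gamma=x$ and regulator $\epsilon x$ into \eqref{eq:Z_R_bound} makes the $x$-dependence cancel inside $G_R\!\left(\frac{\epsilon x}{xA}\right)=G_R(\epsilon/A)\leq G_R$, after which the power counting $\frac{N_Q+wR}{2}-1-\frac{wR}{2}=\frac{N_Q}{2}-1$ and the choice $\frac{A}{1-A}<t$ give an integrable dominator on all of $(0,\infty)$. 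Both proofs lean on the same two ingredients (the bound \eqref{eq:Z_R_bound} and monotone convergence of $\bar{Z}_R(x;\epsilon x)$ to $\bar{Z}_R(x)$), but your version is arguably tighter: the paper's claim that pointwise convergence of the monotone family on $[0,X)$ ``hence'' gives uniform convergence is stated without justification (one would normally need something like Dini's theorem, which requires compactness and continuity of the limit), whereas your argument sidesteps uniformity entirely — the only discretionary step is tuning $A$ below $t/(1+t)$, and everything else is the standard dominated convergence theorem. The trade-off is that your route requires noticing the cancellation inside $G_R$, while the paper's route uses \eqref{eq:Z_R_bound} only to control the integrand near $x=0$ and handles the tail by elementary monotonicity.
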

\begin{proof}
    In order to prove~\eqref{eq:app_Y_intergral_limit}, let us take an $X>0$ and split the integral into two parts
    \begin{equation*}
        \lim_{\epsilon\rightarrow0^+} \int_{0}^X {\rm d}x\, \bar{Z}_R(x; \epsilon\, x) \, x^{\frac{N_Q+w\,R}{2}-1} \, \e^{-t\,x} + \lim_{\epsilon\rightarrow0^+}\int_X^\infty {\rm d}x\, \bar{Z}_R(x; \epsilon\, x) \, x^{\frac{N_Q+w\,R}{2}-1} \, \e^{-t\,x},
    \end{equation*}
    and consider both parts separately.
    
    For the first term, we know that the integral and limit can be interchanged if the integrand is uniformly convergent, i.e.  
    \begin{equation*}
        \lim_{\epsilon\rightarrow0^+} \sup_{x\in\left[0,X\right)} \left|x^{\frac{N_Q+w\,R}{2}-1} \, \e^{-t\,x}\left(\bar{Z}_R(x;\epsilon \, x) - \bar{Z}_R(x) \right)\right|=0.
    \end{equation*}
    Now, note that the function $\bar{Z}_R(x; \epsilon \, x)$ is bounded by a contribution proportional to $x^{-\frac{w\,R}{2}}$ as shown in~\eqref{eq:Z_R_bound}, but the expression above has a factor of $x^{\frac{N_Q+w\,R}{2}-1}$ thus the point $x=0$ does not pose a problem and the value above is finite for all $x\in[0,X)$. But then, since from the first step of proposition~\ref{prop:barZ_result} we know $\bar{Z}_R(x; \epsilon\, x) \rightarrow \bar{Z}_R(x)$, 
    \begin{equation*}
        \forall_{x\in[0,X)}\ |x^{\frac{N_Q+w\,R}{2}-1} \e^{-t\,x} | |\bar{Z}_R(x;\epsilon \, x) - \bar{Z}_R(x)| \rightarrow 0,
    \end{equation*}
    and hence we have uniform convergence, meaning that the integral and limiting operations may be interchanged.
    
    For the second term, since $\bar{Z}_R(x;\epsilon)$ is decreasing in $x$ and $\epsilon$, we obtain an upper bound (and using the convergence of $\bar{Z}_R(x; \epsilon \, x)$ which has been proven already)
    \begin{align*}
        \int_X^\infty {\rm d}x\, \bar{Z}_R(x; \epsilon \, x) \, x^{\frac{N_Q+w\,R}{2}-1} \, \e^{-t\,x} &\leq \int_X^\infty {\rm d}x\, \bar{Z}_R(X) \, x^{\frac{N_Q+w\,R}{2}-1} \, \e^{-t\,x},\\
        &= \bar{Z}_R(X) \int_X^\infty {\rm d}x\, x^{\frac{N_Q+w\,R}{2}-1} \e^{-t\,x}.
    \end{align*}
    Now the $\bar{Z}_R(X)$ does not increase for larger $X$, and the final integral converges to zero for large X. This means that the left-hand side vanishes in the limit $X\rightarrow\infty$.
    
    Thus we conclude that the integral and limiting operations may be interchanged.
\end{proof}

\section{Necessary formulae}\label{app:sec:formulae}
In this work we use some nontrivial formulae that are listed in this subsection. Most of them are used in section~\ref{sec:derivation} for the proof of proposition~\ref{prop:barZ_result} and~\ref{prop:Z}. This section is divided in formulas related to the hypergeometric functions, section~\ref{app:sec:hypergeom}, and formulas directly related to the inverse Laplace transforms, section~\ref{app:sec:inverse_laplace}. 

\subsection{Properties of hypergeometric functions}\label{app:sec:hypergeom}
The hypergeometric function and its generalisations play a central role in many fields of mathematics, physics and other sciences. The reason for this is that many of the special functions used throughout these areas can be expressed in terms of the hypergeometric function. An overview of the hypergeometric function and its application may be found in~\cite{seaborn1991hypergeometric}, and a resource for the confluent hypergeometric function (including the Whittaker's function mentioned below) may be found in~\cite{slater1960confluent}. In this work the final result is expressed in terms of the hypergeometric function, whereas in the derivation we use the confluent hypergeometric function. This appendix section summarises some important notions, definitions and formulae.

%The geometric series
%\begin{equation*}
%    \sum_{n=0}^\infty z^{n},
%\end{equation*}
%may be defined as a sequence (called the geometric sequence) with the property
%\begin{equation*}
%    \frac{a_{n+1}}{a_n} = z,
%\end{equation*}
%
%This sequence may be used to generalise this notion. For the generalised geometric sequence we assume that two consecutive numbers is a rational function of $n$ (i.e. the fraction of two polynomials), which without loss of generality may be written as
%\begin{equation*}
%    \frac{a_{n+1}}{a_n} = \frac{(n+a_1)\ldots (n+a_p)}{(n+b_1)\ldots(n+b_q)(n+1)}z.
%\end{equation*}
%This is called the generalised hypergeometric sequence, and it immediately implies the generalised hypergeometric series which defines the generalised hypergeometric function , ${}_pF_q(a_1,\ldots,a_p,b_1,\ldots,b_q;z)$, as
The generalised hypergeometric function, in some sense a generalisation of the geometric series, is defined as the analytic continuation of the series
\begin{equation}\label{eq:app:gen_hypergeometric_series}
   {}_pF_q(a_1,\ldots,a_p,b_1,\ldots,b_q;z) = \sum_{n=0}^{\infty} \frac{(a_1)_n \ldots (a_p)_n}{n! (b_1)_n\ldots(b_q)_n} z^n,
\end{equation}
where we used the Pochhammer symbols
\begin{equation*}
    (a)_n = \frac{\Gamma[a+n]}{\Gamma[a]}.
\end{equation*}
The hypergeometric function is the case where $p=2$ and $q=1$, i.e., inside the range of convergence 
\begin{equation}
    {}_2F_1(a,b,c;z) = \sum_{n=0}^\infty \frac{(a)_n (b)_n}{n! (c)_n} z^n.
\end{equation}
%The hypergeometric function was actually discovered first and interpreted as a generalisation of the geometric series, and later generalised to the generalised hypergeometric function of~\eqref{eq:app:gen_hypergeometric_series}. 
The hypergeometric function may also be defined as the solution to the hypergeometric differential equation
\begin{equation}\label{eq:app:hypergeom_eq}
    z(1-z) \frac{{\rm d}^2 u(z)}{{\rm d}z^2} + [c-(a+b+1)z]\frac{{\rm d}u(z)}{{\rm d}z} - ab\,u(z)=0.
\end{equation}
For ${\rm Re}(c) > {\rm Re}(b) > 0$ and $z$ not being a real number on $z\geq 1$, the hypergeometric function has an integral representation,\footnote{Actually this is the proper analytic continuation of the series above.}
\begin{equation}\label{eq:appendix:hypergeometric_integral}
    {}_2F_1(a,b,c;z) = \frac{1}{\beta(b,c-b)} \int_{0}^1 {\rm d}t\, t^{b-1}(1-t)^{c-b-1}(1-zt)^{-a},
\end{equation}
where $\beta(a,b)$ is the beta-function defined by
\begin{equation}\label{eq:app:beta_f}
    \beta(a,b) := \frac{\Gamma[a]\Gamma[b]}{\Gamma[a+b]}.
\end{equation}

%The hypergeometric function and the hypergeometric equation have three singular points at $0, 1$ and $\infty$. The confluent hypergeometric function and equation are found if we attempt to move the singular point at $z=1$ to $z=\infty$. Essentially this is done by dividing $z$ by $b$ and taking the limit of $b\rightarrow\infty$. The confluent hypergeometric function then becomes
The confluent hypergeometric function is defined by the limit
\begin{equation}
    M(a,c; z) := \lim_{b\rightarrow\infty} {}_2F_1(a,b,c;z/b) = {}_1F_1(a,c;z),
\end{equation}
which exactly corresponds to the series representation defined in~\eqref{eq:app:gen_hypergeometric_series} for $p=q=1$. The differential equation associated to this function may be found in a similar way, and is called the Kummer's equation\footnote{There is another function besides ${}_1F_1(a,c;z)$ that satisfies the differential equation in~\eqref{eq:app:confl_eq}. This is called the confluent hypergeometric function of the second kind.}
\begin{equation}\label{eq:app:confl_eq}
    z \frac{{\rm d}^2 w(z)}{{\rm d}z^2} + [c-z]\frac{{\rm d}w(z)}{{\rm d}z} - a\,w(z)=0.
\end{equation}
The confluent hypergeometric function also has an integral representation given by
\begin{equation}\label{eq:app:confluent_integral}
    {}_1F_1(a,c;z) = \frac{1}{\beta(a,c-a)}\int_{0}^1 {\rm d}t\, \e^{t\,z}\, t^{a-1} (1-t)^{c-a-1},
\end{equation}
for ${\rm Re}(c) > {\rm Re}(a) > 0$. One property of the confluent hypergeometric function we will need is Kummer's transformation:
\begin{equation}\label{eq:app:kummer_theorem}
    \e^{-z}{}_1F_1(a,c;z) = {}_1F_1(c-a, c; -z).
\end{equation}
The Whittaker functions are a variant of both of the confluent hypergeometric functions. The first Whittaker function is the only one we will use and it is defined by~\cite{slater1960confluent}
\begin{equation}\label{eq:app:confl_whittaker}
    M_{\nu,\mu}(z) := \e^{-\frac{z}{2}} z^{\mu+\frac{1}{2}} {}_1F_1(\mu-\nu+\frac{1}{2},1+2\mu; z).
\end{equation}

\subsection{The (inverse) Laplace transform}\label{app:sec:inverse_laplace}
The Laplace transform and its inverse are heavily used tools in mathematics, physics, engineering and other sciences. A good introduction and overview of this area of mathematics is~\cite{doetsch1974introduction}. In~\cite{oberhettinger1973tables}, many explicit Laplace transforms may be found.\footnote{A note of caution here; since the formula for~\eqref{eq:appendix:InverseLaplace_3} for instance is incorrect.}

The Laplace transform (or Laplace integral) of a function $f(t)$ is given by
\begin{equation}\label{eq:app:laplace_transform}
    F(s) \equiv L(f)(s) := \int_{0}^\infty \e^{-s\,t} \, f(t)\, {\rm d}t.
\end{equation}
The Laplace transform is a very useful tool in many aspects. For our purposes on one hand it is possible to convert a complicated integral to a closed formula in the Laplace-space and secondly we find a formula that exactly corresponds to a Laplace transform which lets us extract a function by taking the inverse Laplace transform. Generally it is often used for solving differential equations. The main reason for this is that under the Laplace transformation, taking a derivative corresponds to multiplication by the variable $s$ in the Laplace-space.

Of course, neither taking the Laplace transform nor taking the inverse Laplace transform is always an easy task. In our case, taking the Laplace transform is not that difficult, but the inverse Laplace transform is more involved.

The Laplace transform of a function $f(t)$ exists if the function satisfies two properties: (1) It is of exponential order, (2) it is integrable over any finite domain in $[0,\infty)$. Note that from~\eqref{eq:app:laplace_transform} it can easily be seen that the inverse Laplace transform cannot be unique, since every null-function (a function of measure zero) may be added to a function and result in the same Laplace transform. Hence, the inverse Laplace transformation can only be expected to map towards an equivalence class generated by the null-functions. 
In the present work, however, this ambiguity does not affect our final result: 
the function~\eqref{eq:def_Z_epsilon} is clearly a monotonically increasing function in $\Delta$, and the end-result~\eqref{eq:result_Z_symmetric} is continuous, hence there is no possibility for a null-function to be added.% However, since our end-result is continuous~\eqref{eq:result_Z_symmetric}, there is no possibility for null-functions to appear. %However, we will in this work only deal with continuous functions which null-functions are not and hence this will not be a problem for us.

For two functions $f(t)$ and $g(t)$, we can define the convolution as
\begin{equation}
    (f * g)(t) = \int_0^t f(\tau) g(t-\tau)\, {\rm d}\tau.
\end{equation}
It can straightforwardly be verified that convolution is both commutative and associative. 
If we assume the convergence of the Laplace integral of $f(t)$ and $g(t)$, then the convolution theorem holds
\begin{equation}\label{eq:appendix:ProductFormula}
    L(f * g) = L(f) L(g),
\end{equation}
in other words, the convolution of two functions in the usual domain corresponds to a product in the Laplace domain. 

The Laplace transform used in section~\ref{sec:derivation} is just a straightforward computation of~\eqref{eq:app:laplace_transform}, but we also use two inverse Laplace transforms. Hence, below are three inverse Laplace transformations we use. We will give short proofs for the formulae.

The first inverse Laplace transform we need is a relatively easy one, namely the inverse Laplace transform of $x^{-A-1}$:
\begin{equation}\label{eq:appendix:InverseLaplace_2}
    L^{-1}[x^{-A-1}] = \frac{t^{A}}{\Gamma[A+1]}.
\end{equation}
This can be found by using~\eqref{eq:app:laplace_transform} on the right-hand side. This formula is valid for $A>-1$. 

In this work we need the inverse Laplace transform of $(1+x)^{-A}x^{-B}$. This is given by
\begin{equation}\label{eq:appendix:InverseLaplace_1}
    L^{-1}[(1+x)^{-A} x^{-B}] = t^{-1 + A + B} \frac{{}_1F_1\left(A, A+B, -t\right)}{\Gamma(A+B)}.
\end{equation}
Showing this is a little less trivial. For this, let us take the Laplace transform of the right hand side, using the integral representation of~\eqref{eq:app:confluent_integral},
\begin{align*}
    L\left[t^{-1 + A + B} {}_1F_1\left(A, A+B, -t\right)\right]&=\frac{1}{\beta(A,B)}\int_0^\infty {\rm d}t\, \e^{-t\,x}t^{-1 + A + B} \int_0^1 {\rm d}\tau\, \e^{-t\,\tau}\tau^{A-1}(1-\tau)^{B-1},\\
    &= \frac{1}{\beta(A,B)}\int_0^1 {\rm d}\tau\,\tau^{A-1}(1-\tau)^{B-1} \int_{0}^\infty {\rm d}t\, \e^{-t\,(x+\tau)}t^{-1 + A + B},\\
    &= \frac{\Gamma[A+B]}{\beta(A,B)}\int_0^1 {\rm d}\tau \, \tau^{A-1}(1-\tau)^{B-1}(x+\tau)^{-A-B},\\
    &= \Gamma[A+B] (1+x)^{-A} x^{-B},
\end{align*}
where in the second step we used~\eqref{eq:appendix:InverseLaplace_2}.

The last explicit equation we will need is related to the Whittaker function~\eqref{eq:app:confl_whittaker},
\begin{equation}\label{eq:appendix:InverseLaplace_3}
    L^{-1}\left[ \beta\left({\scriptstyle \mu-\nu+\frac{1}{2}, \mu+\nu+\frac{1}{2}}\right) \, x^{-\frac{1}{2} - \mu} \e^{-\frac{x}{2}} M_{\nu,\mu}(x)  \right] = \begin{cases}
    0, & t<0,\\
    t^{\mu+\nu-\frac{1}{2}} (1-t)^{\mu-\nu-\frac{1}{2}}, & 0 \leq t \leq 1,\\
    0, & t > 1.
  \end{cases}
\end{equation}
0ne can find this inverse Laplace transform by using the definition of the Laplace transfrom~\eqref{eq:app:laplace_transform}, the integral representation of the confluent hypergeometric function~\eqref{eq:app:confluent_integral}, the definition of the Whittaker function~\eqref{eq:app:confl_whittaker}, and Kummer's transformation~\eqref{eq:app:kummer_theorem}:
\begin{align*}
    L\left[t^{\mu+\nu-\frac{1}{2}}(1-t)^{\mu-\nu-\frac{1}{2}} \Theta(t < 1)\right] &= \int_{0}^1 {\rm d}t \, \e^{-x\,t}\, t^{\mu+\nu-\frac{1}{2}}(1-t)^{\mu-\nu-\frac{1}{2}},\\
    &= \beta\left({ \scriptstyle\mu-\nu+\frac{1}{2}, \mu+\nu+\frac{1}{2}}\right) \, {}_1F_1(\mu+\nu+\frac{1}{2}, 2\mu+1; -x),\\
    &= \beta\left({\scriptstyle \mu-\nu+\frac{1}{2}, \mu+\nu+\frac{1}{2}}\right) \, \e^{-x}{}_1F_1(\mu-\nu+\frac{1}{2}, 2\mu+1; x),\\
    &= \beta\left({\scriptstyle \mu-\nu+\frac{1}{2}, \mu+\nu+\frac{1}{2}}\right) \, x^{-\frac{1}{2}-\mu} \e^{-\frac{x}{2}} M_{\nu,\mu}(x).
\end{align*}

\section{The expression of \texorpdfstring{${\mathcal C}_R(\Delta)$}{CR(Delta)}}\label{sec:app:delta_dep}
In \eqref{eq:defofcr} we introduce the following quantity:
\begin{equation*}
    \mathcal{C}_R(\Delta) := \int_{\mathcal{F}_R} {\rm d} \Phi_w\, \Theta(\Delta - \|\Phi\|^2).
\end{equation*}
A proper definition of this quantity would assume a regularisation function like in~\eqref{eq:def_Z_epsilon}. In this appendix section we keep the discussion short and heuristic. A proper derivation including this regularisation function would go exactly along the lines of the derivation of $\mathcal{Z}_R(\Delta)$ in section~\ref{sec:derivation}. In a similar way as the derivation of $\mathcal{Z}_R(\Delta)$, assuming the existence of $G_R$, we can now take the Laplace tranform 
\begin{align*}
    \bar{\mathcal{C}}_R(\gamma) &= \int_0^\infty {\rm d}\Delta \, \int_{\mathcal{F}_R} {\rm d}\Phi_w \, \e^{-\gamma \Delta} \Theta(\Delta - \|\Phi\|^2),\\
    &= \int_{\mathcal{F}_R} {\rm d}\Phi_w \, \int_{\|\Phi\|^2}^{\infty} {\rm d}\Delta \, \e^{-\gamma\Delta},\\
    &= \gamma^{-1}\int_{\mathcal{F}_R} {\rm d}\Phi_w\, \e^{-\gamma \Phi^2},\\
    &= \gamma^{-\frac{w\, R}{2}-1}\, G_R.
\end{align*}
Now that we related the Laplace transform to $G_R$, we can take the inverse Laplace transform, using~\eqref{eq:appendix:InverseLaplace_2}:
\begin{equation}
    \mathcal{C}_R(\Delta) = \frac{G_R}{\Gamma\left[\frac{w\, R}{2} + 1\right]} \Delta^{\frac{w\, R}{2}}.
\end{equation}

\printbibliography

\end{document}